\newtheorem{theorem}{Theorem}
\newtheorem{lemma}{Lemma}
\newtheorem{claim}{Claim}
\newtheorem{corollary}{Corollary}
\newtheorem{observation}{Observation}
\newtheorem{proposition}{Proposition}
\theoremstyle{definition}
\newlength{\RoundedBoxWidth}
\newsavebox{\GrayRoundedBox}
\newenvironment{GrayBox}[1]%
   {\setlength{\RoundedBoxWidth}{.93\textwidth}
    \def\boxheading{#1}
    \begin{lrbox}{\GrayRoundedBox}
       \begin{minipage}{\RoundedBoxWidth}}%
   {   \end{minipage}
    \end{lrbox}
    \begin{center}
    \begin{tikzpicture}%
       \node(Text)[draw=black!20,fill=white,rounded corners,%
             inner sep=2ex,text width=\RoundedBoxWidth]%
             {\usebox{\GrayRoundedBox}};
        \coordinate(x) at (current bounding box.north west);
        \node [draw=white,rectangle,inner sep=3pt,anchor=north west,fill=white] 
        at ($(x)+(6pt,.75em)$) {\boxheading};
    \end{tikzpicture}
    \end{center}}
\newenvironment{defproblemx}[2][]{\noindent\ignorespaces%
                                \FrameSep=6pt%
                                \parindent=0pt%
                \vspace*{-1.5em}
                \ifthenelse{\isempty{#1}}{%
                  \begin{GrayBox}{\textsc{#2}}%
                }{%
                  \begin{GrayBox}{\textsc{#2} parameterized by~{#1}}%
                }
                \begin{tabular*}{\textwidth}{@{\hspace{.1em}} >{\itshape} p{1.8cm} p{0.8\textwidth} @{}}%
            }{
                \end{tabular*}%
                \end{GrayBox}%
                \ignorespacesafterend
            }  
\newenvironment{defproblemxb}[2][]{\noindent\ignorespaces
  \FrameSep=6pt%
  \parindent=0pt%
  \vspace*{-1.5em}
  \ifthenelse{\isempty{#1}}{%
    \begin{GrayBox}{\textsc{#2}}%
    }{%
      \begin{GrayBox}{\textsc{#2} parameterized by~{#1}}%
      }
      \begin{tabular*}{\textwidth}{@{\hspace{.1em}} >{\itshape} p{1.2cm} p{0.85\textwidth} @{}}%
      }{
      \end{tabular*}%
    \end{GrayBox}%
    \ignorespacesafterend
  }
\newcommand{\lr}[1]{\left( #1\right)}
\newcommand{\LR}[1]{\left\{ #1\right\}}
\renewcommand{\d}{\mathsf{d}}
\newcommand{\cost}{\mathsf{cost}}
\newcommand{\wcost}{\mathsf{wcost}}
\newcommand{\cI}{\mathcal{I}}
\newcommand{\diam}{\mathsf{diam}}
\newcommand{\summ}[1]{\mathsf{sum}_{\sim #1}}
\newcommand{\deltam}[1]{\cost_{#1}}
\newcommand{\deltamw}[1]{\wcost_{#1}}
\newcommand{\real}{\mathbb{R}}
\newcommand{\opt}{\mathsf{OPT}}
\newcommand{\xij}{X_{i, j}}
\newcommand{\sij}{S_{i, j}}
\newcommand{\kmed}{\textsc{$k$-Median}\xspace}
\newcommand{\kmeans}{\textsc{$k$-Means}\xspace}
\newcommand{\matmed}{\textsc{Matroid Median}\xspace}
\newcommand{\colorkmed}{\textsc{Colorful $k$-Median}\xspace}
\newcommand{\matmedwo}{\textsc{Matroid Median with Outliers}\xspace}
\newcommand{\kmedwo}{\textsc{$k$-MedianOut}\xspace}
\newcommand{\kmeanswo}{\textsc{$k$-MeansOut}\xspace}
\newcommand{\kzwo}{\textsc{$(k, z)$-Clustering with Outliers}\xspace}
\begin{document}

\title{Clustering What Matters:\\ Optimal Approximation for Clustering with Outliers}

\author{Akanksha Agrawal\thanks{
Indian Institute of Technology Madras, Chennai, India.}
\and
Tanmay Inamdar\thanks{University of Bergen, Norway}
\and
Saket Saurabh\addtocounter{footnote}{-1}\footnotemark{} \thanks{The Institute of Mathematical Science, HBNI, Chennai, India, and University of Bergen, Norway}
\and
Jie Xue~\thanks{New York University Shanghai, China} 
}

\date{}

\maketitle

\begin{abstract}
Clustering with outliers is one of the most fundamental problems in Computer Science.  Given a set $X$ of $n$ points and two numbers $k,m$, the clustering with outliers aims to exclude $m$ points from $X$ and partition the remaining points into $k$ clusters that minimizes a certain cost function. In this paper, we give a general approach for solving clustering with outliers, which results in a fixed-parameter tractable (FPT) algorithm in $k$ and $m$ \footnote{That is, an algorithm with running time of the form $f(k, m) \cdot n^{O(1)}$ for some function $f$.}, that almost matches the approximation ratio for its outlier-free counterpart.
As a corollary, we obtain FPT approximation algorithms with optimal approximation ratios for $k$-\textsc{Median} and $k$-\textsc{Means} with outliers in general and Euclidean metrics. We also exhibit more applications of our approach to other variants of the problem that impose additional constraints on the clustering, such as fairness or matroid constraints. 
\end{abstract}

\section{Introduction}
Clustering is a family of problems that aims to group a given set of objects in a meaningful way---the exact ``meaning'' may vary based on the application. These are fundamental problems in Computer Science with applications ranging across multiple fields like pattern recognition, machine learning, computational biology, bioinformatics and social science. Thus, these problems have been a subject of extensive studies in the field of Algorithm Design (and its sub-fields), see for instance, the surveys on this topic (and references therein)~\cite{xu2015comprehensive,rokach2009survey,blomer2016theoretical}. 

Two of the central clustering problems are \kmed\ and \kmeans. In the standard \kmed problem, we are given a set $X$ of $n$ points, and an integer $k$, and the goal is to find a set $C^* \subseteq X$ of at most $k$ \emph{centers}, such that the following cost function is minimized over all subsets $C$ of size at most $k$.
\[ \cost(X, C) \coloneqq \sum_{p \in X} \min_{c \in C} \d(p, c) \]
In \kmeans, the objective function instead contains the sum of squares of distances.

Often real world data are contaminated with a small amount of noise and these  noises can substantially change the clusters that we obtain using the underlying algorithm. To circumvent the issue created by such noises, there are several studies of clustering problems with outliers, see for instance,~\cite{Chen08,KrishnaswamyLS18,GoyalJ020,FengZHXW19,friggstad2019approximation,AlmanzaEPR22}. 

In outlier extension of the $k$-\textsc{Median} problem, which we call \kmedwo, we are also given an additional integer $m \ge 0$ that denotes the number of \emph{outliers} that we are allowed to drop. We want to find a set $C$ of at most $k$ centers, and a set $Y \subseteq X$ of at most $m$ outliers, such that $\cost(X \setminus Y, C) \coloneqq \sum_{p \in X \setminus Y} \min_{c \in C} \d(p, c)$ is minimized over all $(Y, C)$ satisfying the requirements. Observe that the cost of clustering for \kmedwo equals the sum of distances of each point to its nearest center, after excluding a set of $m$ points from consideration \footnote{Our results actually hold for a more general formulation of \kmed, where the set of candidate centers may be different from the set $X$ of points to be clustered. We consider this general setting in the technical sections.}. We remark that in a similar spirit we can define the outlier version of the \kmeans problem, which we call \kmeanswo.

In this paper, we will focus on approximation algorithms. An algorithm is said to have an approximation ratio of $\alpha$, if it is guaranteed to return a solution of cost no greater than $\alpha$ times the optimal cost, while satisfying all other conditions. That is, the solution must contain at most $k$ centers, and drop $m$ outliers. If the algorithm is randomized, then it must return such a solution with \emph{high probability}, i.e., probability at least $1-n^{-c}$ for some $c \ge 1$.

For a fixed set $C$ of centers, the set of $m$ outliers is automatically defined, namely the set of $m$ points that are farthest from $C$ (breaking ties arbitrarily). Thus, an optimal clustering for \kmedwo, just like \kmed, can be found in $n^{O(k)}$ time by enumerating all center sets. On the other hand, we can enumerate all $n^{O(m)}$ subsets of outliers, and reduce the problem directly to \kmed. Other than these straightforward observations, there are several non-trivial approximations known for \kmedwo, which we discuss in a subsequent paragraph.

\paragraph{Our Results.} In this work, we describe a general framework that reduces a \emph{clustering with outliers} problem (such as \kmedwo or \kmeanswo) to its outlier-free counterpart in an approximation-preserving fashion. More specifically, given an instance $\cI$ of \kmedwo, our reduction runs in time $f(k, m, \epsilon) \cdot n^{O(1)}$, and produces multiple instances of \kmed, such that a $\beta$-approximation for \emph{at least} one of the produced instances of \kmed implies a $(\beta+\epsilon)$-approximation for the original instance $\cI$ of \kmedwo. This is the main result of our paper.

Our framework does not rely on the specific properties of the underlying metric space. Thus, for special metrics, such as Euclidean spaces, or shortest-path metrics induced by sparse graph classes, for which FPT $(1+\epsilon)$-approximations are known for \kmed, our framework implies matching approximation for \kmedwo. Finally, our framework is quite versatile in that one can extend it to obtain approximation-preserving FPT reductions for related \emph{clustering with outliers} problems, such as \kmeanswo, and clustering problems with \emph{fair outliers} (such as \cite{Bandyapadhyay0P19,JiaSS20}), and \matmedwo. We conclude by giving a partial list of the corollaries of our reduction framework. The running time of each algorithm is $f(k, m, \epsilon) \cdot n^{O(1)}$ for some function $f$ that depends on the problem and the setting. Next to each result, we also cite the result that we use as a black box to solve the outlier-free clustering problem.
\begin{itemize}
	\item $(1+\frac{2}{e}+\epsilon) \approx (1.74+\epsilon)$-approximation (resp. $1+\frac{8}{e}+\epsilon)$-approximation) for \kmedwo (resp.\ \kmeanswo) in general metrics \cite{Cohen-AddadSS21}. These approximations are tight even for $m = 0$, under a reasonable complexity theoretic hypothesis, as shown in the same paper.
	\item $(1+\epsilon)$-approximation for \kmedwo and \kmeanswo in (i) metric spaces of constant doubling dimensions, which includes Euclidean spaces of constant dimension, (ii) metrics induced by graphs of bounded treewidth, and (iii) metrics induced by graphs that exclude a fixed graph as a minor (such as planar graphs). \cite{Cohen-AddadSS21}.
	\item $(2+\epsilon)$-approximation for \matmedwo in general metrics, where $k$ refers to the \emph{rank} of the matroid. \cite{DBLP:conf/icalp/Cohen-AddadG0LL19}
	\item $(1+\frac{2}{e}+\epsilon)$-approximation for \colorkmed in general metrics, where $m$ denotes the total number of outliers across all color classes \cite{DBLP:conf/icalp/Cohen-AddadG0LL19}. The preceding two problems are orthogonal generalizations of \kmedwo, and are formally defined in Section \ref{sec:extensions}.
\end{itemize}

\paragraph{Our Techniques.} Our reduction is inspired from the following seemingly simple observation that relates \kmedwo and \kmed. Let $\cI$ be an instance of \kmedwo, where we want to find a set $C$ of $k$ centers, such that the sum of distances of all except at most $m$ points to the nearest center in $C$ is minimized. By treating the outliers in an optimal solution for $\cI$ as \emph{virtual centers}, one obtains a solution for $(k+m)$-\textsc{Median} \emph{without outliers} whose cost is at most the optimal cost of $\cI$. In other words, the optimal cost of an appropriately defined instance $\widetilde{\cI}$ of $(k+m)$-\textsc{Median} is a \emph{lower bound} on the optimal cost of $\cI$. Since \kmed is a well-studied problem, at this point, one would hope that it is sufficient to restrict the attention to $\widetilde{\cI}$. That is, if we obtain a solution (i.e., a set of $k+m$ centers) for $\widetilde{\cI}$, can then be modified to obtain a solution (i.e., a set of $k$ centers and $m$ outliers) for $\cI$. However, it is unclear whether one can do such a modification without blowing up the cost for $\cI$. Nevertheless, this connection between $\widetilde{\cI}$ and $\cI$ turns out to be useful, but we need several new ideas to exploit it. 

As in before, we start with a constant approximation for $\widetilde{\cI}$, and perform a sampling similar to \cite{chen2009coresets} to obtain a weighted set of points. This set is obtained by dividing the set of points connected to each center in the approximate solution into \emph{concentric rings}, such that the ``error'' introduced in the cost by treating all points in the ring as identical is negligible. Then, we sample $O((k+m) \log n/\epsilon)$ points from each ring, and give each point an appropriate weight. We then prove a crucial concentration bound (cf. Lemma \ref{lem:kmed-samplinglemma}), which informally speaking relates the connection cost of original set of points in a ring, and the corresponding weighted sample. In particular, for \emph{any set of $k$ centers}, with good probability, the difference between the original and the weighted costs is ``small'', \emph{even after excluding at most $m$ outliers from both sets}. Intuitively speaking, this concentration bound holds because the sample size is large enough compared to both $k$ and $m$. Then, by taking the union of all such samples, we obtain a weighted set $S$ of $O(((k+m)\log n/\epsilon)^2)$ points that preserves the connection cost to any set of $k$ centers, even after excluding $m$ outliers with at least a constant probability. Then, we enumerate all sets $Y$ of size $m$ from $S$, and solve the resulting \kmed instance induced on $S \setminus Y$. Finally, we argue that at least one of the resulting instances $\cI'$ will have the property that, a $\beta$-approximation for $\cI'$ implies a $(\beta+\epsilon)$-approximation for $\cI$. 

\paragraph{Related Work.} The first constant approximation for \kmedwo was given by \cite{Chen08} for some large constant. More recently, \cite{KrishnaswamyLS18, GuptaMZ21} gave constant approximations based on iterative LP rounding technique, and the $6.387$-approximation by latter is currently the best known approximation. These approximation algorithms run in polynomial time in $n$. \cite{KrishnaswamyLS18} also give the best known polynomial approximations for related problems of \kmeanswo and \textsc{Matroid Median}. 

Now we turn to FPT approximations, which is also the setting for our results. To the best of our knowledge, there are three works in this setting, \cite{FengZHXW19,GoyalJ020,StatmanRF20}. The idea of relating  $k$-\textsc{Median with $m$ Outliers} to $(k+m)$-\textsc{Median} that we discuss above is also present in these works. Even though it is not stated explicitly, the approach of \cite{StatmanRF20} can be used to obtain FPT approximations in general metrics; albeit with a worse approximation ratio. However, by using additional properties of Euclidean \kmedwo/\kmeanswo (where one is allowed to place centers anywhere in $\mathbb{R}^d$) their approach yields a $(1+\epsilon)$-approximation in FPT time. \cite{GoyalJ020} design approximation algorithms with ratio of $3+\epsilon$ for \kmedwo (resp.\ $9+\epsilon$ for \kmedwo) in time $((k+m)/\epsilon)^{O(k)} \cdot n^{O(1)}$. Thus, our approximation ratios of $1+\frac{2}{e}+\epsilon$ for \kmeanswo, and $1+\frac{8}{e}+\epsilon$ for \kmeanswo improve on these results -- albeit with a slightly worse FPT running time. Furthermore, our result is essentially an \emph{approximation-preserving reduction} from \kmedwo to \kmed in the same metric, which yields $(1+\epsilon)$-approximations in some special settings as discussed earlier. On the other hand, it seems that a loss of $3+\epsilon$ (resp.\ $9+\epsilon$) in the approximation guarantee is inherent to the algorithm of \cite{GoyalJ020}.

On the lower bound side, \cite{GuhaK99} showed it is NP-hard to approximate \kmed (and thus \kmedwo) within a factor $1+\frac{2}{e}-\epsilon$ for any $\epsilon > 0$. Recently, \cite{DBLP:conf/icalp/Cohen-AddadG0LL19} strengthened this result assuming Gap-ETH, and showed that an $(1+\frac{2}{e}-\epsilon)$-approximation algorithm must take at least $n^{k^{g(\epsilon)}}$ time for some function $g()$. 

\textit{Bicriteria approximations} relax the strict requirement of using at most $k$ centers, or dropping at most $m$ outliers, in order to give improved approximation ratios, or efficiency (or both). For \kmedwo, \cite{CharikarKMN01} gave a $4(1+1/\epsilon)$-approximation, while dropping $m(1+\epsilon)$ outliers. \cite{GuptaKLMV17} gave a constant approximation based on local search for \kmeanswo that drops $O(km \log(n\Delta))$ outliers, where $\Delta$ is the diameter of the set of points. \cite{FriggstadRS19} gave a $(25+\epsilon)$-approximation that uses $k(1+\epsilon)$ centers but only drops $m$ outliers. In Euclidean spaces, they also give a $(1+\epsilon)$-approximation that returns a solution with $k(1+\epsilon)$ centers. 

\section{Preliminaries}

\paragraph{Basic notions.}
Let $(\Gamma, \d)$ be a metric space, where $\Gamma$ is a finite set of points, and $\d: \Gamma \times \Gamma \to \real$ is a distance function satisfying symmetry and triangle inequality. For any finite set $S \subseteq \Gamma$ and a point $p \in \Gamma$, we let $\d(p, S) \coloneqq \min_{s \in S} \d(p, S)$, and let $\diam(S) \coloneqq \max_{x, y \in S} \d(x, y)$. For two non-empty sets $S, C \subseteq \Gamma$, let $\d(S, C) = \min_{p \in S} \d(p, S) = \min_{p \in S} \min_{c \in C} \d(p, c)$. For a point $p \in \Gamma, r \ge 0$, and a set $C \subseteq \Gamma$, let $B_C(p, r) = \{q \in C: \d(p, c) \le r\}$. Let $T$ be a finite (multi)set of $n$ real numbers, for some positive integer $n$, and let $1 \le m \le n$. Then, we use the notation $\summ{m}(T)$ to denote the sum of $n-m$ smallest values in $T$ (including repetitions in case of a multi-set). 

\paragraph{The $k$-median problem.}
In the \kmed problem, an instance is a triple $\cI = (X, F, k)$, where $X$ and $F$ are finite sets of points in some metric space $(\Gamma,\d)$, and $k \geq 1$ is an integer. The points in $X$ are called \textit{clients}, and the points in $F$ are called \emph{facilities} or \emph{centers}.
The task is to find a subset $C \subseteq F$ of size at most $k$ that minimizes the cost function
\[\cost(X, C) \coloneqq \sum_{p \in X} \d(p, C). \]
The \textit{size} of an instance $\cI = (X, F, k)$ is defined as $|\cI| = |X \cup F|$, which we denote by $n$. 

\paragraph{$k$-median with outliers.}
The input to \kmedwo contains an additional integer $0 \le m \le n$, and thus an instance is given by a $4$-tuple $\mathcal{I} = (X, F, k, m)$.
Let $C \subseteq F$ be a set of facilities. We define $\deltam{m}(X, C) \coloneqq \summ{m} \{\cost(p, C) : p \in X \}$, i.e., the sum of $n-m$ smallest distances of points in $X$ to the set of centers $C$. The goal is to find a set of centers $C$ minimizing $\deltam{m}(X, C)$ over all sets $C \subseteq F$ of size at most $k$. Given a set $C \subseteq F$ of centers, we denote the corresponding solution by $(Y, C)$, where $Y \subseteq X$ is a set of $m$ outlier points in $X$ with largest distances realizing $\deltam{m}(X, C)$. Given an instance $\cI$ of \kmedwo, we use $\opt(\cI)$ to denote the value of an optimal solution to $\cI$.

\paragraph{Weighted sets and random samples.}
During the course of the algorithm, we will often deal with \emph{weighted sets} of points. Here, $S \subseteq X$ is a weighted set, with each point $p \in S$ having integer weight $w(p) \ge 0$. For any set $C \subseteq F$ and $1 \le m \le |S|$, define $\deltamw{m}(S, C) \coloneqq \summ{m}\{  d(p, C) \cdot w(p) : p \in S\}$.
A \textit{random sample} of a finite set $S$ refers to a random subset of $S$.
Throughout this paper, random samples are always generated by picking points \textit{uniformly} and \textit{independently}.


\section{$k$-Median with Outliers} \label{sec:kmed-general}

In this section, we give our FPT reduction from \kmedwo to the standard \kmed problem. Formally, we shall prove the following theorem.

\begin{theorem} \label{thm-generalreduction}
	Suppose there exists a $\beta$-approximation algorithm for \kmed with running time $T(n,k)$, and a $\tau$-approximation algorithm for $k+m$-\textsc{Median} with polynomial running time, where $\beta$ and $\tau$ are constants. Then there exists a $(\beta+\epsilon)$-approximation algorithm for \kmedwo with running time $\lr{\frac{k+m}{\epsilon}}^{O(m)} \cdot T(n,k) \cdot n^{O(1)}$, where $n$ is the instance size and $m$ is the number of outliers.
\end{theorem}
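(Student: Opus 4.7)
The plan is to implement the blueprint sketched in the techniques paragraph. First I would run the polynomial-time $\tau$-approximation on the auxiliary instance $\widetilde{\cI} = (X,F,k+m)$ of \kmed (without outliers). Writing $\opt = \opt(\cI)$ and using the ``virtual centers'' observation (an optimal solution of $\cI$, viewed as its $k$ true centers together with its $m$ outliers, is a feasible solution to $\widetilde{\cI}$), this returns a set $A \subseteq F$ with $|A| \le k+m$ and $\cost(X,A) \le \tau \cdot \opt$.

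Next I would build a weighted sample $S$ from $X$. Assign every client $p \in X$ to its nearest center $a(p) \in A$, and partition the preimage of each $a \in A$ into geometric rings $R_{a,i} = \{p : 2^i \le \d(p,a) < 2^{i+1}\}$ (plus one ring at distance zero). Rings whose distance scale is below $\opt/\poly(n)$ contribute negligibly and can be absorbed into the innermost ring. From each surviving ring I would sample $N = \Theta((k+m)\log n / \epsilon^2)$ points uniformly, weight each sampled point by $|R_{a,i}|/N$, and let $S$ be the union of all samples; note $|S| = O(((k+m)\log n/\epsilon)^2)$.

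The crucial step is the sampling lemma (Lemma~\ref{lem:kmed-samplinglemma}): with constant probability, for \emph{every} $C \subseteq F$ with $|C| \le k$, the weighted truncated cost $\deltamw{m}(S,C)$ differs from $\deltam{m}(X,C)$ by at most $\epsilon \cdot \opt$. I would prove it ring-by-ring: inside $R_{a,i}$ all points have distance within a factor of two from $a$, so the distance of any $p \in R_{a,i}$ to a fixed $C$ is sandwiched, up to a $(1\pm\epsilon)$ factor and additive slack $\epsilon\,\d(p,a)$, by the distance from $a$ to $C$. A Hoeffding-type bound on the ring-wise sum, summed over rings and combined with a union bound over all $n^{O(k+m)}$ relevant center-plus-outlier configurations, gives the claim. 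The $(k+m)$ factor in $N$ is exactly what lets us also reassign up to $m$ points between the two truncations, which is the non-standard ingredient.

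Given the lemma, I would enumerate every $m$-subset $Y \subseteq S$ (there are $\binom{|S|}{m} = ((k+m)\log n/\epsilon)^{O(m)}$ of them), and for each run the $\beta$-approximation for \kmed on the residual weighted instance $(S\setminus Y, F, k)$ in time $T(n,k)$, keeping the best center set overall. For the guess $Y^{\star}$ that matches (in the sample) the true outlier set of $\opt$, the lemma implies that the optimal $k$-center set $C^{\star}$ realizes $\deltamw{m}(S\setminus Y^{\star}, C^{\star}) \le (1+\epsilon)\opt$, so the $\beta$-approximation returns $\widehat{C}$ with $\deltamw{m}(S\setminus Y^{\star}, \widehat{C}) \le \beta(1+\epsilon)\opt$; applying the lemma in reverse converts this back to $\deltam{m}(X,\widehat{C}) \le (\beta + O(\epsilon))\opt$, and rescaling $\epsilon$ yields the claimed ratio. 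The main obstacle is proving the sampling lemma for the \emph{$m$-truncated} sums: classical coreset arguments control only the untruncated connection cost, whereas here one must simultaneously track the top-$m$ excluded contributions on both sides, which is what forces the sample size to scale with $k+m$ and complicates the Chernoff-plus-union-bound argument beyond the standard case.
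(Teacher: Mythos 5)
Your proposal follows the paper's proof essentially step for step: the $(k+m)$-median lower bound via virtual centers, the ring decomposition around the $\tau$-approximate center set, per-ring uniform sampling with weights, a truncation-aware concentration lemma, and enumeration of all $m$-subsets of the coreset followed by black-box calls to the outlier-free $\beta$-approximation. Two details need repair. First, the auxiliary instance must be $(X, F\cup X, k+m)$, not $(X,F,k+m)$: the virtual centers are outlier \emph{clients}, so the lower bound $\opt(\widetilde{\cI})\le\opt(\cI)$ only holds if a facility is co-located with each client; with facility set $F$ alone, an outlier far from every facility breaks the inequality. Second, the sampling lemma cannot be stated as ``$\deltamw{m}(S,C)$ is within $\epsilon\cdot\opt$ of $\deltam{m}(X,C)$'': discarding the $m$ largest weighted terms of $S$ discards up to $m\cdot\max_{i,j}w_{i,j}$ original points, which can be far more than $m$, so that quantity may drastically undershoot $\deltam{m}(X,C)$. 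The paper instead fixes, for each center set $C$, a distribution $\{m_{i,j}\}$ of the $m$ outliers over the rings and compares $\sum_{i,j}\deltam{m_{i,j}}(X_{i,j},C)$ with $\sum_{i,j}\deltamw{t_{i,j}}(S_{i,j},C)$ where $t_{i,j}=\lfloor m_{i,j}/w_{i,j}\rfloor$; correspondingly, the ``good guess'' in the enumeration is not the sampled portion of the true outlier set (which may well be empty) but the $t_{i,j}$ farthest sample points in each ring. You gesture at this (``reassign up to $m$ points between the two truncations''), but the lemma as you state it is the wrong object, and getting the truncation indices right on the sample side is the one genuinely non-routine step of the whole argument.
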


\noindent
Combining the above theorem with the known $(1+\frac{2}{e}+\epsilon)$-approximation $k$-median algorithm \cite{DBLP:conf/icalp/Cohen-AddadG0LL19} that runs in $(k/\epsilon)^{O(k)} \cdot n^{O(1)}$ time, we directly have the following result.

\begin{corollary}
	There exists a $(1+\frac{2}{e}+\epsilon)$-approximation algorithm for \kmedwo with running time $\lr{\frac{k+m}{\epsilon}}^{O(m)} \cdot \lr{\frac{k}{\epsilon}}^{O(k)} \cdot n^{O(1)}$, where $n$ is the instance size and $m$ is the number of outliers.
\end{corollary}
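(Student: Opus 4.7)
The plan is to transform the outlier instance $\cI = (X, F, k, m)$ into a small collection of outlier-free \kmed instances via a coreset-style sampling guided by an approximate $(k+m)$-Median solution. First, consider the auxiliary instance $\widetilde{\cI} = (X, F \cup X, k+m)$. For any feasible $(Y, C)$ of $\cI$, the set $Y \cup C$ has at most $k+m$ centers and satisfies $\cost(X, Y \cup C) \le \deltam{m}(X, C)$ (every outlier in $Y$ now pays zero), so $\opt(\widetilde{\cI}) \le \opt(\cI)$. Applying the hypothesized $\tau$-approximation for $(k+m)$-Median to $\widetilde{\cI}$ yields, in polynomial time, a center set $\widetilde{C}$ with $\cost(X, \widetilde{C}) \le \tau \cdot \opt(\cI)$.

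Second, use $\widetilde{C}$ to build a weighted summary $S$ of the clients. Partition $X$ into clusters $X_1, \dots, X_{k+m}$ according to the nearest center in $\widetilde{C}$, and subdivide each cluster into $O(\log n/\epsilon)$ geometric rings centered on the corresponding center of $\widetilde{C}$ (with radii doubling outward, and tiny innermost and ``far'' rings absorbing negligible cost). Within each ring, every pair of points has distance to $\widetilde{C}$ within a $(1 \pm \epsilon)$ factor. From each ring, draw a uniform sample of size $N = \Theta((k+m)\log n/\epsilon^2)$ with replacement, and weight each sampled point by (ring size)$/N$. Let $S$ be the union of these weighted samples.

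The heart of the argument is a uniform concentration claim (Lemma~\ref{lem:kmed-samplinglemma}): with constant probability, simultaneously for every $C \subseteq F$ with $|C| \le k$,
\[ \bigl| \deltamw{m}(S, C) - \deltam{m}(X, C) \bigr| \le \epsilon \cdot \opt(\cI). \]
Ring-by-ring, the per-point cost to any fixed $C$ varies by a $(1+\epsilon)$ factor, so a Hoeffding bound controls the contribution of each ring to within an additive $\epsilon \cdot \opt(\cI)/(\#\text{rings})$ for a fixed $C$; a union bound over the polynomially many equivalence classes of $k$-center sets induced by the ring discretization then covers all $C$ simultaneously. Removal of $m$ outliers on either side is absorbed because the per-ring sample size $N$ far exceeds $m$. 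Armed with the sketch $S$, we enumerate all subsets $Y \subseteq S$ with $|Y| \le m$---there are $|S|^{O(m)} = ((k+m)\log n/\epsilon)^{O(m)}$ of them---and for each we run the $\beta$-approximation for \kmed on the weighted instance $(S \setminus Y, F, k)$, obtaining a candidate $C_Y$. At least one guess $Y^\star$ correctly identifies the outliers of $S$ corresponding to an optimum $C^\star$ of $\cI$, in the sense that $\deltamw{m}(S, C^\star)$ equals the total weighted cost of $S \setminus Y^\star$ under $C^\star$; for this $Y^\star$, the $\beta$-approximation on $S \setminus Y^\star$ combined with the concentration bound yields $\deltam{m}(X, C_{Y^\star}) \le (\beta + O(\epsilon)) \cdot \opt(\cI)$. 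Evaluating each candidate directly on $\cI$ and returning the best, followed by rescaling $\epsilon$, gives the claimed guarantee.

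The main obstacle is the uniform concentration: the identity of the $m$ discarded outliers depends on the chosen $C$, so a naive tail bound on the total cost does not suffice. Resolving this hinges on the observation that, thanks to the geometric ring structure, the distance to any $C$ is essentially constant within a ring, so what needs to be estimated is just the \emph{number} of points in each ring served by each center of $C$. This produces a combinatorial structure of polynomial complexity over which a union bound is safe, and for which the chosen sample size $\Theta((k+m)\log n/\epsilon^2)$ per ring comfortably accommodates both the simultaneous estimation over all $k$-center sets and the slack needed to permit arbitrary removal of $m$ outliers on either side.
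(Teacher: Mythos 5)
Your overall architecture is exactly the paper's: the corollary is obtained by instantiating the general reduction (Theorem~\ref{thm-generalreduction}) with the $(k/\epsilon)^{O(k)} n^{O(1)}$-time $(1+\frac{2}{e}+\epsilon)$-approximation for \kmed of Cohen-Addad et al.\ as the black box, and your coreset-plus-enumeration scheme mirrors the paper's proof of that reduction. However, the justification you give for the uniform concentration step rests on a claim that is false and that you explicitly call ``the heart of the argument.'' You assert that within a ring the distance to an \emph{arbitrary} candidate set $C$ is essentially constant (varies by a $(1\pm\epsilon)$ factor), so that only the \emph{number} of points per ring served by each center of $C$ needs estimating, and that the union bound can therefore be taken over the polynomially many ``equivalence classes'' induced by the ring discretization. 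The rings only control distances to the reference center $c_i$ of the $(k+m)$-median solution: a ring $X_{i,j}$ at radius $\approx 2^jR$ around $c_i$ can simultaneously contain a point at distance $0$ from $C$ and a point at distance $\approx 2^{j+1}R$ from $C$, so the per-point cost to $C$ is not constant within a ring, and the cost of a ring under $C$ is not determined by the discretization.

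What actually makes the lemma work is additive, not multiplicative: within $X_{i,j}$ the values $\d(p,C)$ all lie in an interval of length $\diam(X_{i,j})$, so the Hoeffding/Haussler bound gives per-ring additive error $O\bigl(\xi\,|X_{i,j}|(\diam(X_{i,j})+\d(X_{i,j},C))\bigr)$ for a fixed $C$; summing over rings, one uses $\sum_{i,j}|X_{i,j}|\diam(X_{i,j}) \le 6\tau\cdot\opt(\cI)$ together with $\sum_{\text{large }(i,j)}|X_{i,j}|\,\d(X_{i,j},C)\le 2\,\deltam{m}(X,C)$ to obtain the relative $\epsilon\cdot\deltam{m}(X,C)$ bound. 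The union bound is then taken over all $n^k$ center sets directly --- this is precisely what the $k\ln n$ term in the sample size pays for --- not over ring-induced classes. Relatedly, the concentration statement must be established for every distribution $\{m_{i,j}\}$ of the $m$ outliers among the rings (with the coreset dropping $t_{i,j}=\lfloor m_{i,j}/w_{i,j}\rfloor$ points from $S_{i,j}$), not only for the single global ``drop the $m$ farthest'' operation you state; this per-ring formulation is what allows the enumeration step to exhibit a good guess $T=\bigcup_{i,j}T_{i,j}$. These defects are repairable and the quantitative choices (sample size, number of enumerated subsets, final running time) are consistent with the paper, but as written the central step of your proof is supported by an incorrect geometric claim rather than by the additive charging argument that is actually needed.
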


The rest of this section is dedicated to proving Theorem~\ref{thm-generalreduction}.
Let $\cI = (X, F, k, m)$ be an instance of \kmedwo.
We define a $(k+m)$-\textsc{Median} instance $\cI' = (X, F \cup X, k+m)$, where in addition to the original set of facilities, there is a facility co-located with each client.
We have the following observation.

\begin{observation}\label{obs:kmed-lb}
	$\opt(\cI') \le \opt(\cI)$, i.e., the value of an optimal solution to $\cI'$ is a lower bound on the value of an optimal solution to $\cI$.
\end{observation}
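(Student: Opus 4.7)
The plan is to exhibit an explicit feasible solution to $\cI'$ whose cost is at most $\opt(\cI)$, by repurposing the outliers of the optimal \kmedwo solution as ``virtual'' centers. Since $\cI'$ allows centers to be chosen from $F \cup X$ (in particular, co-located with any client), and since its budget is $k+m$ rather than $k$, we have exactly enough room to absorb the outlier set.

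Concretely, I would start by fixing an optimal solution $(Y, C^*)$ of $\cI$, so $C^* \subseteq F$ with $|C^*| \le k$, and $Y \subseteq X$ with $|Y| \le m$, satisfying
\[
\opt(\cI) \;=\; \deltam{m}(X, C^*) \;=\; \sum_{p \in X \setminus Y} \d(p, C^*).
\]
I would then define the candidate solution $C' \coloneqq C^* \cup Y$ for $\cI'$. Clearly $C' \subseteq F \cup X$ and $|C'| \le |C^*| + |Y| \le k + m$, so $C'$ is feasible for $\cI'$.

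Next I would bound $\cost(X, C')$ by splitting the sum over $Y$ and $X \setminus Y$. For $p \in Y$, since $p \in C'$ we have $\d(p, C') \le \d(p, p) = 0$. For $p \in X \setminus Y$, since $C^* \subseteq C'$ we have $\d(p, C') \le \d(p, C^*)$. Summing,
\[
\cost(X, C') \;=\; \sum_{p \in Y} \d(p, C') + \sum_{p \in X \setminus Y} \d(p, C') \;\le\; 0 + \sum_{p \in X \setminus Y} \d(p, C^*) \;=\; \opt(\cI).
\]
Since $\opt(\cI') \le \cost(X, C')$ by optimality of $\cI'$'s solution, the claim follows.

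There is really no obstacle here: the statement is essentially a definitional observation, and the only point worth being careful about is that the facility set of $\cI'$ includes a co-located facility for every client, which is exactly what allows the outliers of $\cI$ to be promoted to centers at zero connection cost in $\cI'$.
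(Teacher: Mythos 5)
Your proof is correct and follows essentially the same route as the paper: take an optimal solution $(Y, C^*)$ of $\cI$, promote the outliers $Y$ to co-located centers to form $C' = C^* \cup Y \subseteq F \cup X$ with $|C'| \le k+m$, and observe that the resulting outlier-free cost is at most $\deltam{m}(X, C^*) = \opt(\cI)$. If anything, your write-up states the case split over $Y$ versus $X \setminus Y$ more cleanly than the paper's own wording.
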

\begin{proof}
	Let $(Y^*, C^*)$ be an optimal solution to $\cI$ realizing the value $\opt(\cI)$. We define a solution $(Y', C')$ for $\cI'$ as follows: let $Y' = X$, and $C' = C^* \cup  Y^*$. That is, the set of centers $C'$ is obtained by adding a facility co-located with each outlier point from $Y^*$. Now we argue about the costs. Since $C^* \subseteq C'$, for each point $p \in Y^*$, $\d(p, C') \le \d(p, C^*)$. On the other hand, for each $q \in X \setminus Y^*$, $\d(q, C') = 0$, since there is a co-located center in $C^*$. This implies that $\deltam{0}(X, C') \le \deltam{m}(X, C)$. Since the solution $(Y', C')$ is feasible for the instance $\cI'$, it follows that $\opt(\cI')$ is no larger than the cost $\deltam{0}(X, C')$.
\end{proof}

Now, we use $\tau$-approximation algorithm guaranteed by the theorem, for the instance $\cI'$, and obtain a set of at most $k' \le k+m$ centers $A$ such that $\deltam{0}(X, A) \le \tau \cdot \opt(\cI') \le \tau \cdot \opt(\cI)$.
By assumption, running this algorithm takes polynomial time.
Let $R = \frac{\deltam{0}(X, A)}{\tau n}$ be a lower bound on average radius, and $\phi = \lceil \log(\tau n) \rceil$. 
For each center $c_i \in A$, let $X_i \subseteq X$ denote the set of points whose closest center in $A$ is $c_i$.
By arbitrarily breaking ties, we can assume that the sets $X_i$ are disjoint, i.e., $\{X_i\}_{1 \le i \le k'}$ forms a partition of $X$. 
Now we further partition each $X_i$ into smaller groups such that the points in each group have similar distances to $c_i$.
Specifically, we define
\[\xij \coloneqq \begin{cases}
	B_{X_i}(c_i, R) & \text{ if } j = 0,
	\\B_{X_i}(c_i, 2^j R) \setminus B_{X_i}(c_i, 2^{j-1} R) & \text{ if } j \ge 1.
\end{cases}
\]

Let $s = \frac{c\tau^2}{\epsilon^2} \lr{m+ k \ln n  + \ln(1/\lambda)}$, for some large enough constant $c$.
We define a weighted set of points $\sij \subseteq \xij$ as follows.
If $|\xij| \le s$, then we say $\xij$ is \emph{small}.
In this case, define $\sij \coloneqq \xij$ and let the weight $w_{i, j}$ of each point $p \in \sij$ be $1$.
Otherwise, $|\xij| > s$ and we say that $\xij$ is \emph{large}.
In this case, we take a random sample $S_{i, j} \subseteq X_{i, j}$ of size $s$.
We set the weight of every point in $S_{i,j}$ to be $w_{i, j} = |X_{i, j}|/|S_{i, j}|$. 
For convenience, assume the weights $w_{i, j}$ to be integers \footnote{We defer the discussion on how to ensure the integrality of the weights to Section \ref{subsec:intweights}.}.
Finally, let $S = \bigcup_{i, j} \sij$. The set $S$ can be thought of as an $\epsilon$-coreset for the \kmedwo instance $\cI$. Even though we do not define this notion formally, the key properties of $S$ will be proven in Lemma \ref{lem:kmed-prob-bound} and \ref{lem:final-kmed-bound}. Thus, we will often informally refer to $S$ as a \emph{coreset}.

\begin{proposition} \label{prop:coreset-props}
	We have $|S| = O(((k+m)\log n /\epsilon)^2)$ if $\lambda$ is a constant.
\end{proposition}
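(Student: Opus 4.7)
The plan is a direct size-counting argument that multiplies the number of nonempty rings by the per-ring sample bound.

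First, I would bound the number of pairs $(i,j)$ that contribute to $S$. The index $i$ ranges over the $k' \le k+m$ centers of $A$. For the index $j$, I would observe that for any $p \in X$, the distance $\d(p, A)$ is at most the total clustering cost $\deltam{0}(X, A) = \tau n R$. Hence every client lies in $B_{X_i}(c_i, 2^{\phi} R)$ for $\phi = \lceil \log(\tau n) \rceil$, so only rings with $0 \le j \le \phi$ are nonempty. Treating $\tau$ as a constant, this gives $\phi + 1 = O(\log n)$ values of $j$, and therefore $O((k+m)\log n)$ nonempty pairs $(i,j)$ in total.

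Next, I would bound $|S_{i,j}|$ uniformly by $s$. In the small case we set $S_{i,j} = X_{i,j}$ with $|X_{i,j}| \le s$ by the definition of ``small'', and in the large case the random sample has size exactly $s$. With $\lambda$ a constant (and $\tau$ a constant), the definition $s = \frac{c\tau^2}{\epsilon^2}(m + k\ln n + \ln(1/\lambda))$ simplifies to $s = O\!\left((m + k\log n)/\epsilon^2\right) = O\!\left((k+m)\log n/\epsilon^2\right)$.

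Finally, I would combine the two bounds:
\[
|S| \;\le\; \sum_{i,j} |S_{i,j}| \;\le\; O((k+m)\log n) \cdot O\!\left(\tfrac{(k+m)\log n}{\epsilon^2}\right) \;=\; O\!\left(\left(\tfrac{(k+m)\log n}{\epsilon}\right)^{\!2}\right),
\]
as claimed. The only nontrivial step is the cutoff $j \le \phi$, which is the main (minor) obstacle; it hinges on the fact that the total cost $\tau n R$ upper-bounds every individual distance $\d(p, A)$, so the geometric ring decomposition exhausts all of $X_i$ by level $\phi = O(\log n)$.
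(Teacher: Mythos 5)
Your proof is correct and follows essentially the same route as the paper's: bound the number of nonempty rings by $O((k+m)\log n)$ via the cutoff $j \le \phi$, bound each $|S_{i,j}|$ by $O(s)$, and multiply. The only cosmetic difference is that the paper uses $|S_{i,j}| \le 2s$ to account for the integral-weight modification of the sampling, which changes nothing in the asymptotic bound.
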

\begin{proof}
	For any $p \in X$, $\d(p, A) \le \deltam{0}(X, A) = \tau n \cdot R \le 2^{\phi} R$.
	Therefore, for any $c_i \in A$ and $j > \phi$, $X_{i, j'} = \emptyset$, and $X_i = \bigcup_{j = 0}^{\phi} X_{i, j}$.
	It follows that the number of non-empty sets $X_{i, j}$ is at most $|A| \cdot (1+\log(\tau n)) = O((k+m) \log n)$, since $|A| \le k+m$ and $\tau$ is a constant.
	For each non-empty $X_{i, j}$, $|S_{i, j}| \le 2s = O((m +  k\log n)/\epsilon^2)$, if $\lambda$ is a constant.
	Since $S = \bigcup_{i, j} S_{i, j}$, the claimed bound follows.
\end{proof}

\begin{proposition}{\cite{chen2009coresets,Haussler92}} \label{prop:concentration}
	Let $M \ge 0$ and $\eta$ be fixed constants, and let $h(\cdot)$ be a function defined on a set $V$ such that $\eta \le h(p) \le \eta + M$ for all $p \in V$. Let $U \subseteq V$ be a random sample of size $s$, and $\delta > 0$ be a parameter. If $s \ge \frac{M^2}{2\delta^2} \ln(2/\lambda)$, then
	\begin{equation*}
		\Pr \left[ \left| \frac{h(V)}{|V|} - \frac{h(U)}{|U|} \right| \ge \delta \right] \le \lambda,
	\end{equation*}
	where $h(U) \coloneqq \sum_{u \in U} h(u)$, and $h(V) \coloneqq \sum_{v \in V} h(v)$.
\end{proposition}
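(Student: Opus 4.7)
The plan is to recognize Proposition~\ref{prop:concentration} as a direct application of Hoeffding's inequality, exploiting the paper's stipulation that random samples are drawn uniformly and \emph{independently}. Write $U = \{u_1,\ldots,u_s\}$ with the $u_i$ i.i.d.\ uniform over $V$, and define $Z_i \coloneqq h(u_i)$. The hypothesis $\eta \le h(p) \le \eta + M$ on $V$ immediately gives the deterministic bound $\eta \le Z_i \le \eta + M$, so each $Z_i$ is supported in an interval of length exactly $M$. Moreover, $\mathbb{E}[Z_i] = \frac{1}{|V|}\sum_{p\in V} h(p) = h(V)/|V|$, while the empirical mean equals $\frac{1}{s}\sum_{i=1}^{s} Z_i = h(U)/|U|$ by construction.

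With this reformulation in hand, the next step is to invoke Hoeffding's inequality for the average of $s$ independent random variables bounded in an interval of length $M$:
\[
\Pr\!\left[\left|\frac{1}{s}\sum_{i=1}^{s} Z_i - \mathbb{E}[Z_1]\right| \ge \delta\right] \;\le\; 2\exp\!\left(-\frac{2s\delta^2}{M^2}\right).
\]
Plugging in the assumed lower bound $s \ge \frac{M^2}{2\delta^2}\ln(2/\lambda)$ makes the exponent at most $-\ln(2/\lambda)$, so the right-hand side is bounded by $2 \cdot (\lambda/2) = \lambda$. This is exactly the conclusion of the proposition.

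The only subtlety worth flagging is the sampling model: Hoeffding in its cleanest form requires independence, and if one instead sampled from $V$ without replacement then one would have to appeal to Serfling's inequality (which yields the same qualitative bound). The paper sidesteps this by explicitly declaring that samples are uniform and independent, so the $Z_i$ are genuinely i.i.d.\ and the straightforward Hoeffding bound suffices. Beyond that, the argument is essentially bookkeeping: translate $h(U)/|U|$ and $h(V)/|V|$ into empirical and true means of a bounded i.i.d.\ sequence, apply the inequality, and verify that the chosen sample size kills the exponent.
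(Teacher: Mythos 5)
Your proof is correct: the paper states this proposition as a black-box citation of \cite{chen2009coresets,Haussler92} without proving it, and your reduction to Hoeffding's inequality for the i.i.d.\ sample mean (with the exponent check $2s\delta^2/M^2 \ge \ln(2/\lambda)$ yielding the bound $2\cdot\lambda/2=\lambda$) is exactly the standard derivation underlying those references. Your remark on the sampling model is also apt, since the paper explicitly samples uniformly and independently (with replacement), so the $Z_i$ are genuinely i.i.d.\ and no without-replacement variant is needed.
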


\begin{lemma} \label{lem:kmed-samplinglemma}
	Let $(\Gamma, \d)$ be a metric space, $V \subseteq \Gamma$ be a finite set of points, $\lambda', \xi > 0$, $q \ge 0$ be parameters, and define $s' = \frac{4}{\xi^2} \lr{q +  \ln\frac{2}{\lambda'}}$. 
	Suppose $U \subseteq V$ is a random sample of size $s'$.
	Then for any fixed finite set $C \subseteq F$ with probability at least $1 - \lambda'$ it holds that for any $0 \le t \le q$, 
	\begin{equation*}
		\left|\deltam{t}(V,C) - \deltamw{ t'}(U, C) \right| \le \xi |V| (\diam(V) + \d(V, C)),
	\end{equation*}
	where $t' = \lfloor t|U|/|V| \rfloor$ and $w(u) = |V|/|U|$ for all $u \in U$.
\end{lemma}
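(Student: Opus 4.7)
The plan is to reduce the outlier-truncated quantities $\deltam{t}(V,C)$ and $\deltamw{t'}(U,C)$ to unrestricted sums of bounded functions, so that Proposition~\ref{prop:concentration} can be applied directly. Write $\eta = \d(V,C)$, $M = \diam(V)$, $n=|V|$, $w = n/s'$, and $\gamma(p) = \d(p,C)$. The triangle inequality gives $\eta \le \gamma(p) \le \eta + M$ for every $p \in V$. Sort $V$ so that $\gamma(p_1) \le \gamma(p_2) \le \cdots \le \gamma(p_n)$, and for each integer $t$ with $0 \le t \le q$ define the threshold $\theta_t = \gamma(p_{n-t})$. The key identity to exploit is
\[
\deltam{t}(V,C) \;=\; \sum_{p \in V} \min(\gamma(p), \theta_t) \;-\; t\,\theta_t,
\]
which follows directly from splitting the sum according to the sort order and holds even with ties at $\theta_t$.

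For each $t \in \{0,\ldots,q\}$ I would apply Proposition~\ref{prop:concentration} to the function $h_t(p) := \min(\gamma(p), \theta_t)$, which takes values in $[\eta, \eta+M]$. The crucial point is that $\theta_t$ depends only on $V$, not on the random sample $U$, so the proposition applies cleanly with $\delta = \xi(M+\eta)/2$ and failure probability $\lambda'/(2(q+1))$; the required sample bound $\frac{M^2}{2\delta^2}\ln(4(q+1)/\lambda')$ is at most $\tfrac{2}{\xi^2}\ln(4(q+1)/\lambda')$ and so is subsumed by the hypothesis $s' = \tfrac{4}{\xi^2}(q + \ln(2/\lambda'))$ (using $q \ge \ln(q{+}1)$). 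A union bound over $t$ then yields, with probability at least $1-\lambda'/2$, the simultaneous estimate $|h_t(V) - w\,h_t(U)| \le (\xi/2)\,|V|(M+\eta)$ for every $t$.

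Next I need to translate $w\,h_t(U) - w\,t'\theta_t$ into the actual truncated weighted cost $\deltamw{t'}(U,C)$. Let $\tilde T_t := |\{u \in U : \gamma(u) > \theta_t\}|$. A short case analysis according to whether $\tilde T_t \ge t'$ or $\tilde T_t < t'$ shows
\[
\bigl|\deltamw{t'}(U,C) - (w\,h_t(U) - w\,t'\theta_t)\bigr| \;\le\; w\,|\tilde T_t - t'|\,(M+\eta);
\]
when $\tilde T_t = t'$ the two quantities coincide exactly, and otherwise the samples that sit on the ``wrong'' side of the threshold $\theta_t$ contribute at most $M+\eta$ each. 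Since $\tilde T_t$ is a sum of $s'$ i.i.d.\ Bernoullis whose mean is essentially $t s'/n$ and $t' = \lfloor t s'/n \rfloor$, a Hoeffding inequality followed by a union bound over $t \in \{0,\ldots,q\}$ gives $|\tilde T_t - t'| \le \xi s'/4$ simultaneously for all $t$, with probability at least $1-\lambda'/2$, again under the stated sample-size hypothesis.

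Combining the three estimates using the identity $\deltam{t}(V,C) = h_t(V) - t\theta_t$, together with $|wt'-t| \le w$ and $\theta_t \le M+\eta$, a routine triangle-inequality bookkeeping yields $|\deltam{t}(V,C) - \deltamw{t'}(U,C)| \le \xi\,|V|(M+\eta)$ for every $0 \le t \le q$ on the intersection of the two good events. I expect the main difficulty to lie in the third step: the threshold $\theta_t$ was engineered to linearize the outlier structure of $V$, but the $t'$-th largest value in $U$ typically differs from $\theta_t$, and one must carefully relate the common-threshold approximation $w\,h_t(U) - w\,t'\theta_t$ to the actual truncated weighted cost. Ties at $\theta_t$, the sign of $\tilde T_t - t'$, and the rounding in $t' = \lfloor ts'/n\rfloor$ all need care, though none of these issues changes the bound by more than an $O(w(M+\eta))$ additive term, which is then absorbed by the leading $\xi|V|(M+\eta)$ via a constant-factor reassignment of $\xi$.
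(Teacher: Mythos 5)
Your argument is correct in substance, but it takes a genuinely different route from the paper's. The paper applies Proposition~\ref{prop:concentration} exactly once, to the \emph{untruncated} function $h(v)=\d(v,C)$, and then handles the outlier truncation purely deterministically: its Observation on $h'$ versus $h$ shows that dropping the $t$ largest terms from $V$ (resp.\ $t'$ from $U$) shifts the normalized sums by at most $\frac{t}{|V|}(\eta(V)+\diam(V))\le\frac{\xi}{4}(\eta(V)+\diam(V))$, which is negligible precisely because the hypothesis forces $|V|\ge s'\ge 4q/\xi$. In particular the ``for all $0\le t\le q$'' quantifier costs the paper nothing probabilistically. You instead linearize the truncation via the threshold identity $\deltam{t}(V,C)=\sum_p\min(\gamma(p),\theta_t)-t\theta_t$, apply the concentration bound to each $h_t=\min(\gamma,\theta_t)$ with a union bound over $t$ (the extra $\ln(q{+}1)$ is indeed absorbed by the $q$ term in $s'$), and then separately control the mismatch between the population threshold $\theta_t$ and the sample's own $t'$-th order statistic via $|\tilde T_t-t'|$. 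This is the sharper, more standard coreset-for-outliers argument, and it would survive a much weaker relation between $q$ and $s'$; the paper's version is shorter because it never needs the threshold-mismatch bookkeeping. Two small remarks: your Hoeffding step is actually unnecessary, since $U\subseteq V$ gives $\tilde T_t\le|\{p:\gamma(p)>\theta_t\}|\le t\le q$ deterministically, and $w\cdot q\cdot(M+\eta)\le\frac{\xi^2}{4}|V|(M+\eta)$ already suffices (this also disposes of the tie issue, which otherwise costs up to $wt'(M+\eta)=O(q(M+\eta))$ rather than $O(w(M+\eta))$ as you state); and your final constants overshoot $\xi|V|(M+\eta)$ slightly, but since $\xi$ is a free parameter set to $\epsilon/(8\tau)$ downstream, the constant-factor rescaling you invoke is harmless to the theorem.
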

\begin{proof}
	Throughout the proof, we fix the set $C \subseteq F$ and $0 \le t \le q$ as in the statement of the lemma. Next, we define the following notation. For each $v \in V$, let $h(v) \coloneqq d(v, C)$, and let $h(V) \coloneqq \sum_{v \in V}h(v)$, and $h(U) \coloneqq \sum_{u \in U} h(u)$. Analogously, let $h'(V) \coloneqq \deltam{t}(V, C)$, and $h'(U) \coloneqq \deltam{t'}(U, C)$. Let $\eta(V) \coloneqq \min_{v \in V} \d(v, C)$, and $\eta(U) \coloneqq \min_{u \in U} \d(u, C)$. We summarize a few properties about these definitions in the following observation.
	
	\begin{observation} \label{obs:kmed-h}
		The following inequalities hold.
		\begin{itemize}
			\item $\lr{t\frac{|U|}{|V|}-1} \le t' \le t\frac{|U|}{|V|}$
			\item $h'(V) \le h(V) - t \cdot \eta(V) \le h(V)$, and $h'(V) \ge h(V) - t \cdot (\eta(V) + \diam(V))$
			\item $h'(U) \le h(U)$, and $h'(U) \ge h(U) - t\frac{|U|}{|V|} \cdot (\eta(U) + \diam(U))$
			\item $\eta(V) \le \eta(U) \le \eta(V) + \diam(V)$
		\end{itemize}
	\end{observation}
	\begin{proof}
		The first item is immediate from the definition $t' = \lfloor t|U|/|V| \rfloor$. Consider the second item. For each $v \in V$, let $g(v) \coloneqq \d(v, C) - \eta(V)$. Let $V' \subseteq V$ denote a set of points of size $t$ that have the $t$ largest distances to the centers in $C$. By triangle inequality, we get that $\d(v, C) \le \d(v, v^*) + \d(v^*, C) \le \diam(V) + \eta(V)$, where $v^* \in V$ is a point realizing the minimum distance $\eta(V)$ to the set of centers $C$. This implies that $g(v) \le \diam(V)$ for all $v \in V$. Now, observe that 
		\begin{align}
			h(V) &= h'(V) + \sum_{v \in V'} \lr{\eta(V) + g(v)} \nonumber \tag{Since $h'(V)$ excludes the distances of points in $V'$}
			\\&= h'(V) + t \cdot \eta(V) + \sum_{v \in V'} g(v) \label{eqn:kmed-ineq1}
			\\&\ge h'(V) + t \cdot \eta(V) \tag{$g(v) \ge 0$ for all $v \in V$}
		\end{align}
		By rearranging the last inequality, we get the first part of the second item. To see the second part, observe that (\ref{eqn:kmed-ineq1}) implies that $h(V) \le h'(V) + t\cdot \eta(V) + t\cdot\diam(V)$, since $g(v) \le \diam(V)$ for all $v \in V$.
		
		The proof of the third item is analogous to the proof of the first item. In addition, we need to combine the inequalities from the first item of the observation. We omit the details. The fourth item follows from the fact that $U \subseteq V$, and via triangle inequality.
	\end{proof}

	By applying Proposition~\ref{prop:concentration} with $\eta = \eta(V), M = \diam(V)$ and $\delta = \xi M/2$, we know with probability at most $\lambda'$,
	\begin{equation*}
		\left| \frac{\sum_{v \in V} \d(v, C)}{|V|} - \frac{\sum_{u \in U} \d(u, C) }{|U|} \right| \ge \frac{\xi}{2} \diam(V).
	\end{equation*}
	Recall that, $h(V) = \sum_{v \in V} \d(v, C)$ and $h(U) = \sum_{u \in U} \d(u, C)$. Thus, with probability at least $1-\lambda'$, we have that 
	\begin{equation}
		\left| \frac{h(V)}{|V|} - \frac{h(U)}{|U|} \right| \le \frac{\xi}{2} \diam(V). \label{eqn:kmed-ineq2}
	\end{equation}

	Now, we prove the following technical claim.
	\begin{claim} \label{clm-Delta}
		Suppose (\ref{eqn:kmed-ineq2}) holds. Then we have,
		\begin{equation}
			\left| \frac{h'(V)}{|V|} - \frac{h'(U)}{|U|} \right| \le \xi \cdot (\diam(V) + \d(V,C))  \label{eqn:kmed-ineq31}
		\end{equation}
	\end{claim}
	\begin{proof}
		We suppose that (\ref{eqn:kmed-ineq2}) holds, and show that (\ref{eqn:kmed-ineq31}) holds with probability 1.
		First, consider,
		\begin{align}
			\frac{h'(U)}{|U|} - \frac{h'(V)}{|V|} &\le \frac{h(U)}{|U|} - \frac{h(V)}{|V|} + \frac{t \cdot (\eta(V) + \diam(V))}{|V|} \nonumber \tag{From Observation~\ref{obs:kmed-h}, Part 2}
			\\&\le \frac{\xi}{2} \diam(V) + \frac{t}{|V|} \cdot (\eta(V)+\diam(V)) \tag{From (\ref{eqn:kmed-ineq2})} \nonumber
			\\&\le \frac{\xi}{2} \diam(V) + \frac{\xi}{2} \cdot (\eta(V)+\diam(V)) \label{eqn:kmed-ineq3}
		\end{align}
		where the last inequality follows from the assumption that $|V| \ge s' \ge \frac{4q}{\xi} \ge \frac{4t}{\xi}$.
		Now, consider
		\begingroup
		\allowdisplaybreaks
		\begin{align}
			\frac{h'(V)}{|V|} - \frac{h'(U)}{|U|} &\le \frac{h(V)}{|V|} - \frac{h(U)}{|U|} - \frac{t \eta(V)}{|V|} + \frac{t\frac{|U|}{|V|} \cdot (\eta(U) + \diam(U))}{|U|} \tag{From Observation~\ref{obs:kmed-h}, Part 3}
			\\&\le \frac{\xi}{2} \diam(V) - \frac{t \cdot  \eta(V)}{|V|} + \frac{t \cdot \eta(U)}{|V|} + \frac{t \cdot  \diam(U)}{|V|} \tag{From (\ref{eqn:kmed-ineq2})}
			\\&\le \frac{\xi}{2} \diam(V) - \frac{t \cdot \eta(V)}{|V|} + \frac{t \cdot (\eta(V) + \diam(V)) + t \cdot  \diam(U)}{|V|} \tag{From Observation~\ref{obs:kmed-h}, Part 4}
			\\&\le \frac{\xi}{2} \diam(V) + \frac{2t \cdot \diam(V)}{|V|} \tag{$\diam(U) \le \diam(V)$}
			\\&\le \xi \diam(V) \label{eqn:kmed-ineq4}
		\end{align}
		\endgroup
		where the last inequality follows from the assumption that $|V| \ge s' \ge \frac{4q}{\xi} \ge \frac{4t}{\xi}$.
	\end{proof}
	Thus, from Claim \ref{clm-Delta}, we know that since (\ref{eqn:kmed-ineq2}) holds with probability at least $1-\lambda'$, the following inequality also holds with probability at least $1-\lambda'$.
	\begin{equation*}
		\left| h'(V) - h'(U) \cdot \frac{|V|}{|U|} \right| \le \xi |V| \cdot (\diam(V) + \d(V, C)).
	\end{equation*} 
	The preceding inequality is equivalent to the one in the lemma, because $h'(V) = \deltam{t}(V, C)$, and $h'(U) \cdot \frac{|V|}{|U|} = \frac{|V|}{|U|} \cdot \deltam{t'}(U, C) = \deltamw{t'}(U, C)$.
	Finally, notice that Claim~\ref{clm-Delta} holds when the $h'$ function is defined with respect to any choice of $t \in \{0,1,\dots,q\}$.
	Therefore, with probability at least $1-\lambda'$, the inequality in the lemma holds for any $t \in \{0,1,\dots,q\}$, which completes the proof.
\end{proof}

Next, we show the following observation, whose proof is identical to an analogous proof in \cite{Chen08}.
\begin{observation} \label{obs:kmed-sum}
	The following inequalities hold.
	\begin{itemize}
		\item $\sum_{i, j} |X_{i, j}| 2^j R \le 3\cdot \deltam{0}(X, A) \le 3\tau \cdot \opt(\cI)$.
		\item $\sum_{i, j} |X_{i, j}| \diam(X_{i, j}) \le 6\cdot \deltam{0}(X, A) \le 6\tau \cdot \opt(\cI)$.
	\end{itemize}
\end{observation}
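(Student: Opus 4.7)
The plan is to bound both sums by relating them to the connection cost $\sum_i \sum_{p\in X_i}\d(p,c_i)=\deltam{0}(X,A)$, using the fact that by construction each point in $X_{i,j}$ lies in the annulus $(2^{j-1}R, 2^j R]$ (or in the disc of radius $R$ when $j=0$) around its assigned center $c_i$. The second inequality $\deltam{0}(X,A)\le \tau\cdot\opt(\cI)$ in each item is immediate: it is just the approximation guarantee for $\cI'$ combined with Observation~\ref{obs:kmed-lb}. So the real content is the first inequality in each item.

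For the first item, I would split the sum at $j=0$. For $j\ge 1$, each $p\in X_{i,j}$ satisfies $\d(p,c_i)>2^{j-1}R$, so $|X_{i,j}|\cdot 2^{j}R\le 2\sum_{p\in X_{i,j}}\d(p,c_i)$. Summing over all $i$ and all $j\ge 1$ and using that the $X_i$'s partition $X$ and $c_i$ is the closest center in $A$ for every $p\in X_i$, this contributes at most $2\deltam{0}(X,A)$. For the $j=0$ layer, I simply bound $\sum_i |X_{i,0}|\cdot R\le nR$; by the definition $R=\deltam{0}(X,A)/(\tau n)$ and the fact that $\tau\ge 1$, this is at most $\deltam{0}(X,A)$. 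Adding the two contributions yields the claimed $3\deltam{0}(X,A)$ bound.

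For the second item, the key observation is that $X_{i,j}\subseteq B_{X_i}(c_i,2^{j}R)$ for every $j\ge 0$, so the triangle inequality gives $\diam(X_{i,j})\le 2\cdot 2^{j}R$ for all $i,j$. Therefore
\[
\sum_{i,j}|X_{i,j}|\diam(X_{i,j})\;\le\;2\sum_{i,j}|X_{i,j}|\cdot 2^{j}R,
\]
and the first item immediately upgrades the right-hand side to $6\deltam{0}(X,A)\le 6\tau\cdot\opt(\cI)$.

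I do not anticipate any real obstacle here: both inequalities are standard geometric accounting once the assignment $X=\bigsqcup_i X_i$ to nearest centers, the geometric definition of the rings $X_{i,j}$, and the chosen value of $R$ are in place. The only spot that requires a tiny bit of care is handling the innermost ring $X_{i,0}$, where we cannot lower-bound distances to $c_i$ by a positive quantity and so must instead invoke the definition of $R$ (and $\tau\ge 1$) to absorb $nR$ into $\deltam{0}(X,A)$.
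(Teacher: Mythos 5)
Your proof is correct and follows essentially the same route as the paper's: both arguments reduce the first sum to $2\sum_{p}\d(p,A)+nR\le 3\deltam{0}(X,A)$ (the paper via the single pointwise bound $2^jR\le 2\d(p,A)+R$, you via an equivalent case split at $j=0$), and both derive the second item from $\diam(X_{i,j})\le 2\cdot 2^jR$. No gaps.
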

\begin{proof}
	
	For any $p \in X_{i, j}$, it holds that $2^j R \le \max \LR{2\d(p, A), R} \le 2\d(p, A) + R$. Therefore, 
	\begin{align*}
		\sum_{i, j} |X_{i, j}| \cdot 2^{j} R  &\le \sum_{i, j} \sum_{p \in X_{i, j}} 2^j R
		\\&\le \sum_{i, j} \sum_{p \in X_{i, j}} 2\d(p, A) + R
		\\&= 2 \sum_{p \in X} \d(p, A) + |X| \cdot |R|
		\\&= 2 \cdot \deltam{0}(X, A) + n|R|
		\\&\le 3 \cdot \deltam{0}(X, A) \tag{By definition of $R$}
		\\&\le 3\tau \opt(\cI') \le 3\tau \opt(\cI).
	\end{align*}
	We also obtain the second item by observing that $\diam(X_{i, j}) \le 2 \cdot 2^j \cdot R$.
\end{proof}

Next, we show that the following lemma, which informally states that the union of the sets of sampled points approximately preserve the cost of clustering w.r.t.\ \emph{any} set of at most $k$ centers, \emph{even after} excluding at most $m$ outliers overall. 

\begin{lemma} \label{lem:kmed-prob-bound}
	The following statement holds with probability at least $1-\lambda/2$:
	For all sets $C \subseteq F$ of size at most $k$, and for all sets of non-negative integers $\{m_{i, j}\}_{i, j}$ such that $\sum_{i, j} m_{i, j} \le m$,
	\begin{align}
		&\left| \sum_{i, j} \deltam{m_{i, j}}(X_{i, j}, C) - \sum_{i, j} \deltamw{t_{i, j}}(S_{i, j}, C) \right| \le \epsilon \cdot \sum_{i, j} \deltam{m_{i, j}}(X_{i, j}, C) \label{eqn:sum-bound}
	\end{align}
	where $\displaystyle t_{i, j} = \left\lfloor m_{i, j} / w_{i, j}\right\rfloor$.
\end{lemma}
\begin{proof}
	Fix an arbitrary set $C \subseteq F$ of at most $k$ centers, and the integers $\{m_{i,j}\}_{i, j}$ such that $\sum_{i, j} m_{i,j} \le m$ as in the statement of the lemma. For each $i = 1, \ldots, |A|$, and $0 \le j \le \phi$, we invoke Lemma \ref{lem:kmed-samplinglemma} by setting $V = \xij$, and $U = \sij$, $\xi = \frac{\epsilon}{8 \tau}$, $\lambda' = n^{-k} \lambda / (4(k+m)(1+\phi))$, and $q = m$. This implies that, the following inequality holds with probability at least $1-\lambda'$ for each set $\xij$, and the corresponding $m_{i, j} \le m$,
	\begin{align}
		&\left| \deltam{m_{i, j}}(X_{i, j},C) - \deltamw{t_{i,j}}(S_{i,j}, C) \right| \nonumber
		\\&\le \frac{\epsilon}{8\tau} |X_{i,j}| (\diam(X_{i,j}) + \d(X_{i, j}, C)) \label{eqn:kmed-sumineq}
	\end{align}
	
	Note that the sample size required in order for this inequality to hold is
	\begin{align*}
		s' &= \left\lceil \frac{4}{\xi^2} \lr{ m +  \ln \lr{\frac{2}{\lambda'}}} \right\rceil 
		\\&= \left\lceil 4\lr{\frac{8\tau}{\epsilon}}^2 \cdot \lr{ m + \ln\lr{\frac{8n^k(k+m)(1+\phi)}{\lambda}}}\right\rceil \le s.
	\end{align*}

	For any $i, j$, if $\xij < s$ (i.e., $\xij$ is \emph{small}), then the sample $S_{i,j}$ is equal to $X_{i, j}$, and each point in $S_{i, j}$ has weight equal to $1$.
	This implies that $\deltam{m_{i, j}}(X_{i, j}, C) = \deltamw{t_{i, j}}(S_{i, j}, C)$ for all such $X_{i, j}$, and their contribution to the right hand side of inequality (\ref{eqn:sum-bound}) is zero. Thus, it suffices to restrict the sum on the right hand side of (\ref{eqn:sum-bound}) over \emph{large} sets $X_{i, j}$'s.
	Let $\mathcal{L}$ consist of all pairs $(i,j)$ such that $X_{i,j}$ is large.
	We have the following claim.
	\begin{claim} \label{cl:kmed-sum-large}
		$\sum_{(i, j) \in \mathcal{L}} |X_{i, j}|\d(X_{i, j}, C) \le 2\deltam{m}(X, C).$
	\end{claim}
	\begin{proof}
		Let $Y$ denote the farthest $m$ points in $X$ from the set of centers $C$. Now, fix $(i, j) \in \mathcal{L}$ and let $q_{i, j} \coloneqq |X_{i, j} \cap Y| \le m$ denote the number of outliers in $X_{i, j}$. Since $|X_{i, j}| \ge 2m \ge 2q_{i, j}$, the set $X_{i, j} \setminus Y$ is non-empty, and all points $X_{i, j} \setminus Y$ contribute towards $\deltam{m}(X, C)$. That is, 
		\begin{equation}
			\sum_{(i, j) \in \mathcal{L}} \sum_{p \in X_{i, j} \setminus Y} \d(p, C) \le \deltam{m}(X, C) \label{eqn:kmed-summinus}
		\end{equation}
		
		For any $p \in X_{i, j} \setminus Y$, $\d(X_{i, j}, C) \le \d(p, C)$ from the definition. Therefore, 
		\begin{align*}
			&\sum_{(i, j) \in \mathcal{L}} |X_{i, j}| \cdot \d(X_{i, j}, C) 
			\\&\le \sum_{(i, j) \in \mathcal{L}} 2|X_{i, j} \setminus Y| \cdot \d(X_{i, j}, C) 
			\\&\le 2 \cdot \sum_{(i, j) \in \mathcal{L}}\ \sum_{p \in X_{i, j} \setminus Y} \d(p, C)
			\\&\le 2 \cdot \deltam{m}(X, C) 
		\end{align*}
		Here, to see the second inequality, see that $|X_{i, j}| \ge 2q_{i, j}$, which implies that $|X_{i, j}|- q_{i, j} \le 2(|X_{i, j}| - q_{i, j})$. The last inequality follows from (\ref{eqn:kmed-summinus}).
	\end{proof}
	Thus, by revisiting (\ref{eqn:sum-bound}) and (\ref{eqn:kmed-sumineq}), we get:
	\begin{align*}
		&\sum_{(i, j) \in \mathcal{L}} \left| \deltam{m_{i, j}}(X_{i, j}, C) - \deltamw{t_{i, j}}(S_{i, j}, C) \right| 
		\\&\le \frac{\epsilon}{8\tau} \sum_{(i, j) \in \mathcal{L}} |X_{i,j}| (\diam(X_{i,j}) + \d(X_{i, j}, C)) \tag{From (\ref{eqn:kmed-sumineq})}
		\\&\le \frac{\epsilon}{8\tau} \cdot (6 \tau \cdot \opt(\cI) + 2\deltam{m}(X, C)) \tag{From Obs. \ref{obs:kmed-sum} and Claim \ref{cl:kmed-sum-large}}
		\\&= \frac{\epsilon}{8\tau} (8\tau \cdot \deltam{m}(X, C)) = \epsilon \cdot \deltam{m}(X, C)
	\end{align*}
	Where, in the last inequality, since $C$ is an arbitrary set of at most $k$ centers, $\opt(\cI) \le \deltam{m}(X, C)$.
	Note that the preceding inequality holds for a fixed set $C$ of centers with probability at least $1 - |A|\cdot (1+\phi)\lambda' = 1-n^{-k}\lambda/2$, which follows from taking the union bound over all sets $X_{i, j}$, $1 \le i \le |A| \le k+m$, and $0 \le j \le \phi$. 
	
	Since $F$ has at most $n^k$ subsets of size at most $k$, the statement of the lemma follows from taking a union bound.
\end{proof}


Now we are ready to prove Theorem~\ref{thm-generalreduction}.
We enumerate every subset $T \subseteq S$ of size at most $m$.
For each $T$, we compute a $\beta$-approximation solution for the (weighted) $k$-median instance $(S \backslash T,F,k)$.
Theorem~\ref{thm-generalreduction} only assumes the existence of a $\beta$-approximation algorithm for unweighted $k$-median, which cannot be applied to weighted point sets.
However, we can transform $S \backslash T$ to an equivalent unweighted sets $R$, which contains, for each $x \in S \backslash T$, $w(x)$ copies of (unweighted) $x$, where $w(x)$ is the weight of $x$ in $S \backslash T$.
It is clear that $\mathsf{wcost}(S \backslash T,C) = \mathsf{cost}(R,C)$ for all $C \subseteq F$.
Thus, we can apply the $\beta$-approximation \kmed algorithm on $(R,F,k)$ to compute a center set $C \subseteq F$ of size $k$ such that $\mathsf{wcost}(T,C) \leq \beta \cdot \mathsf{wcost}(T,C')$ for any $C' \subseteq F$ of size $k$.
We do this for all $T \subseteq S$ of size at most $m$.
Let $\mathcal{C}$ denote the set of all center sets $C$ computed.
We pick a center set $C^* \subseteq \mathcal{C}$ that minimizes $\mathsf{cost}_m(X,C^*)$, and return $(Y^*,C^*)$ as the solution where $Y^* \subseteq X$ consists of the $m$ points in $X$ farthest to the center set $C^*$.

\begin{lemma} \label{lem:final-kmed-bound}
	With probability at least $1- \frac{\lambda}{2}$, for all $C \subseteq F$ of size $k$ we have
	\begin{equation*}
		\mathsf{cost}_m(X,C^*) \leq \frac{1+\epsilon}{1-\epsilon} \cdot \beta \mathsf{cost}_m(X,C).
	\end{equation*}
\end{lemma}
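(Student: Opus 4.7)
I condition on the (probability $\ge 1-\lambda/2$) event that the conclusion of Lemma~\ref{lem:kmed-prob-bound} holds, and show that for every $C \subseteq F$ with $|C| = k$, the enumerated family $\mathcal{C}$ contains some $C_T$ satisfying $\cost_m(X, C_T) \le \frac{1+\epsilon}{1-\epsilon}\beta\cdot\cost_m(X, C)$. Since $C^* \in \mathcal{C}$ minimises $\cost_m(X, \cdot)$ over $\mathcal{C}$, the lemma follows.

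Fix such a $C$, and let $\{m_{i,j}\}_{i,j}$ be the partition of the $m$ outliers of $C$ among the buckets $X_{i,j}$, so that $\cost_m(X, C) = \sum_{i,j}\cost_{m_{i,j}}(X_{i,j}, C)$ and $\sum_{i,j} m_{i,j} = m$. Set $t_{i,j} = \lfloor m_{i,j}/w_{i,j}\rfloor$; since each $w_{i,j} \ge 1$, $\sum_{i,j} t_{i,j} \le m$. Let $T \subseteq S$ consist, within each $S_{i,j}$, of the $t_{i,j}$ points of largest weighted distance to $C$: then $|T| \le m$, so $T$ is among the subsets enumerated by the algorithm, and by construction $\wcost(S \setminus T, C) = \sum_{i,j}\wcost_{t_{i,j}}(S_{i,j}, C)$. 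Applying Lemma~\ref{lem:kmed-prob-bound} to the pair $(C, \{m_{i,j}\})$ yields $\wcost(S \setminus T, C) \le (1+\epsilon)\cost_m(X, C)$, so the $\beta$-approximate centre set $C_T$ returned by the algorithm for this $T$ satisfies
\[\wcost(S \setminus T, C_T) \le \beta\cdot\wcost(S \setminus T, C) \le \beta(1+\epsilon)\cost_m(X, C).\]

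The crux is to invert this and bound $\cost_m(X, C_T)$ by $\wcost(S \setminus T, C_T)$ up to a $(1-\epsilon)^{-1}$ factor, despite $T$ having been designed with $C$ in mind rather than $C_T$. I invoke Lemma~\ref{lem:kmed-prob-bound} a \emph{second} time, on the pair $(C_T, \{m_{i,j}\})$, reusing the \emph{same} partition $\{m_{i,j}\}$ derived from $C$; this is admissible because the lemma holds uniformly over all admissible $(C', \{m'_{i,j}\})$. It gives
\[\sum_{i,j}\wcost_{t_{i,j}}(S_{i,j}, C_T) \;\ge\; (1-\epsilon)\sum_{i,j}\cost_{m_{i,j}}(X_{i,j}, C_T) \;\ge\; (1-\epsilon)\cost_m(X, C_T),\]
where the second inequality uses that $\cost_m(X, C_T)$ is the minimum aggregate cost over all feasible outlier partitions of $X$, so any specific partition (here $\{m_{i,j}\}$) upper bounds it. Combined with $\sum_{i,j}\wcost_{t_{i,j}}(S_{i,j}, C_T) \le \wcost(S \setminus T, C_T)$---which holds because each $\wcost_{t_{i,j}}(S_{i,j}, C_T)$ minimises over $t_{i,j}$-outlier choices in $S_{i,j}$ with respect to $C_T$, whereas $T \cap S_{i,j}$ is one particular such choice (made with respect to $C$)---we conclude $(1-\epsilon)\cost_m(X, C_T) \le \beta(1+\epsilon)\cost_m(X, C)$, as required.

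The only non-routine step is the second invocation of Lemma~\ref{lem:kmed-prob-bound}, which transplants the partition $\{m_{i,j}\}$ defined from $C$ into the concentration inequality for $C_T$: this is what bridges the two centre sets via a common witness, and it succeeds precisely because both $\wcost_{t_{i,j}}(S_{i,j}, \cdot)$ and $\cost_m(X, \cdot)$ are outlier-\emph{minimising} quantities, so substituting a ``wrong'' partition still yields inequalities in the directions we need.
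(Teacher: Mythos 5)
Your proof is correct and follows essentially the same route as the paper's: define $T$ from the $t_{i,j}$ farthest points of each $S_{i,j}$ with respect to $C$, apply Lemma~\ref{lem:kmed-prob-bound} once to $C$ and once to the returned center set with the same partition $\{m_{i,j}\}$, and exploit that both $\wcost_{t_{i,j}}$ and $\cost_m$ are outlier-minimizing so the "wrong" outlier choices only help. No gaps.
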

\begin{proof}
	The statement in Lemma~\ref{lem:kmed-prob-bound} holds with probability at least $1- \lambda/2$.
	Thus, it suffices to assume the statement in Lemma~\ref{lem:kmed-prob-bound}, and show $\mathsf{cost}_m(X,C^*) \leq (1+\epsilon)^2 \beta \cdot \mathsf{cost}_m(X,C)$ for any $C \subseteq F$ of size $k$.
	Fix a subset $C \subseteq F$ of size $k$.
	Let $Y \subseteq X$ consist of the $m$ points in $X$ farthest to $C$, and define $m_{i,j} = |Y \cap X_{i,j}|$.
	Set $t_{i,j} = \lfloor m_{i,j}/w_{i,j} \rfloor$.
	Note that $\mathsf{cost}_m(X,C) = \sum_{i,j} \mathsf{cost}_{m_{i,j}}(X_{i,j},C)$.
	Furthermore, by Lemma~\ref{lem:kmed-prob-bound}, we have
	\begin{align} 
		\sum_{i,j} \mathsf{wcost}_{t_{i,j}}(S_{i,j},C) &\leq (1+\epsilon) \cdot \sum_{i,j} \mathsf{cost}_{m_{i,j}}(X_{i,j},C)\nonumber 
		\\&= (1+\epsilon) \cdot \mathsf{cost}_m(X,C). \label{eq-CtoC}
	\end{align}
	Now let $T_{i,j} \subseteq S_{i,j}$ consist of the $t_{i,j}$ points in $S_{i,j}$ farthest to $C$, and define $T = \bigcup_{i,j} T_{i,j}$.
	Since $|T| \leq m$, $T$ is considered by our algorithm and thus there exists a center set $C' \in \mathcal{C}$ that is a $\beta$-approximation solution for the (weighted) $k$-median instance $(S \backslash T, F, k)$.
	We have 
	\begin{align} 
		\mathsf{wcost}(S \backslash T,C') &\leq \beta \cdot \mathsf{wcost}(S \backslash T,C) = \beta \sum_{i,j} \mathsf{wcost}_{t_{i,j}}(S_{i,j},C). \label{eq-C'toC}
	\end{align}
	Note that $\mathsf{wcost}(S \backslash T,C') \geq \sum_{i,j} \mathsf{wcost}_{t_{i,j}}(S_{i,j},C')$.
	Furthermore, by applying Lemma~\ref{lem:kmed-prob-bound} again, we have $\sum_{i,j} \mathsf{wcost}_{t_{i,j}}(S_{i,j},C') \geq (1-\epsilon) \cdot \sum_{i,j} \mathsf{cost}_{m_{i,j}}(X_{i,j},C')$.
	It then follows that
	\begin{align} 
		(1-\epsilon) \cdot \mathsf{cost}_m(X_{i,j},C') &\leq (1-\epsilon) \cdot \sum_{i,j} \mathsf{cost}_{m_{i,j}}(X_{i,j},C') \leq \mathsf{wcost}(S \backslash T,C'). \label{eq-C'toC'}
	\end{align}
	Finally, we have $\mathsf{cost}_m(X,C^*) \leq \mathsf{cost}_m(X,C')$ by the construction of $C^*$.
	Combining this with (\ref{eq-CtoC}), (\ref{eq-C'toC}), and (\ref{eq-C'toC'}), we have $\mathsf{cost}_m(X,C^*) \leq \frac{1+\epsilon}{1-\epsilon} \cdot \beta \mathsf{cost}_m(X,C)$, which completes the proof.
\end{proof}

By choosing $\lambda>0$ to be a sufficiently small constant, and by appropriately rescaling $\epsilon$ \footnote{Since Lemma \ref{lem:final-kmed-bound} implies a $\beta(1+O(\epsilon))$-approximation, and $\beta$ is a constant, it suffices to redefine $\epsilon = \epsilon/c$ for some large enough constant $c$ to get the desired result.}, the above lemma shows that our algorithm outputs a $(\beta+\epsilon)$-approximation solution with a constant probability. By repeating the algorithm a logarithmic number of rounds, we can guarantee the algorithm succeeds with high probability.
The number of subsets $T \subseteq S$ of size at most $m$ is bounded by $|S|^{O(m)}$, which is $\lr{\frac{(k+m) \log n}{\epsilon}}^{O(m)}$ by Proposition \ref{prop:coreset-props}.
Note that $(\log n)^{O(m)} \leq \max\{m^{O(m)}, n^{O(1)}\}$.
Thus, the number of subsets $T \subseteq S$ of size at most $m$ is bounded by $f(k, m, \epsilon) \cdot n^{O(1)}$, where $f(k, m, \epsilon) = \lr{\frac{k+m}{\epsilon}}^{O(m)}$.
Thus, we need to call the $\beta$-approximation \kmed algorithm $f(k, m, \epsilon) \cdot n^{O(1)}$ times, which takes $f(k,m,\epsilon) n^{O(1)} \cdot T(n,k)$ time overall.
The first call of the algorithm for obtaining a $\tau$-approximation to the $(k+m)$-\textsc{Median} instance takes polynomial time. Besides this, the other parts of our algorithm can all be done in polynomial time. This completes the proof of Theorem~\ref{thm-generalreduction}.

\subsection{Ensuring Integral Weights in the Coreset} \label{subsec:intweights}
Recall that in order to obtain the set $S_{i, j}$ from a \emph{large} $X_{i, j}$, we sample $s$ points uniformly and independently at random (with replacement), and give each point in $S_{i, j}$ the weight $w_{i,j} = \frac{|X_{i, j}|}{|S_{i, j}|}$. In the main body of the proof, we assumed that the quantity $w_{i, j}$ is integral for the sake of simplicity. However, in general $\frac{|X_{i, j}|}{|S_{i, j}|}$ may not be an integer. Here, we describe how to modify this construction to ensure integral weights.

To this end, let $X^{(1)}_{i, j} \subseteq X_{i, j}$ be an arbitrary subset of size $|\xij| \mod s$, and let $X^{(2)}_{i, j} = X_{i, j} \setminus Y_{i, j}$. From this time onward, we \emph{treat} $X^{(1)}_{i, j}$ and $X^{(2)}_{i, j}$ as two separate sets of the form $X_{\cdot, \cdot}$, and proceed with the construction of the coreset. 

In particular, observe that $|X^{(1)}_{i, j}| < s$, i.e., it is \emph{small}, and $|X^{(2)}_{i, j}| = t \cdot s$ for some positive integer $t$, and thus $X^{(2)}_{i, j}$ is \emph{large}. Therefore, we let $S^{(1)}_{i, j} \gets X^{(1)}_{i, j}$, and each point is added with weight $1$. On the other hand, to obtain $S^{(2)}_{i, j}$, we sample $s$ points uniformly and independently at random from $X^{(2)}_{i, j}$, and set the weight of each point to be $|X^{(2)}_{i, j}|/s$, which is an integer. From this point onward, we proceed with exactly the same analysis as in the original proof, i.e., we treat $X^{(1)}_{i, j}$ as a \emph{small} set, and $X^{(2)}_{i,j}$ as a large set in the analysis. Since for the small sets, the sampled set is equal to the original set, their contribution to the left hand side of the following inequality in the statement of Lemma \ref{lem:kmed-prob-bound}, is equal to zero. 
\begin{align*}
	&\left| \sum_{i, j} \deltam{m_{i, j}}(X_{i, j}, C) - \sum_{i, j} \deltamw{t_{i, j}}(S_{i, j}, C) \right| \le \epsilon \cdot \sum_{i, j} \deltam{m_{i, j}}(X_{i, j}, C)
\end{align*}
Therefore, the analysis of Lemma \ref{lem:kmed-prob-bound} goes through without any modifications. The only other minor change is that the number of points in the coreset $S$, which is obtained by taking the union of all $S_{\cdot, \cdot}$, is now at most twice the previous bound, which is easily absorbed in the big-oh notation.


\section{Extensions} \label{sec:extensions}

\subsection{$k$-Means with Outliers}
This is similar to \kmedwo, except that the cost function is the sum of \emph{squares} of distances of all except $m$ outlier points to a set of $k$ facilities. This generalizes the well-known \kmeans problem. Here, the main obstacle is that, the squares of distances do not satisfy triangle inequality, and thus it does not form a metric. However, they satisfy a \emph{relaxed} version of triangle inequality (i.e., $\d(p, q)^2 \le 2 (\d(p, r)^2 + \d(r, q)^2)$). This technicality makes the arguments tedious, nevertheless, we can follow the same approach as for \kmedwo, to obtain optimal FPT approximation schemes. Our technique implies an optimal $(1+\frac{8}{e} + \epsilon)$-approximation for \kmeanswo (using the result of \cite{DBLP:conf/icalp/Cohen-AddadG0LL19} as a black-box), improving upon polynomial-time $53.002$-approximation from \cite{KrishnaswamyLS18}, and $(9+\epsilon)$-approximation from \cite{GoyalJ020} in time FPT in $k, m$ and $\epsilon$.

In fact, using our technique, we can get improved approximation guarantees for \kzwo, where the cost function involves $z$-th power of distances, where $z \ge 1$ is fixed for a problem. Note that the cases $z = 1$ and $z= 2$ correspond to \kmedwo and \kmeanswo respectively. We give the details for \kzwo in the appendix.

\subsection{Matroid Median with Outliers} A \emph{matroid} is a pair $\mathcal{M} = (F, \mathcal{S})$, where $F$ is a ground set, and $\mathcal{S}$ is a collection of subsets of $F$ with the following properties: (i) $\emptyset \in \mathcal{S}$, (ii) If $A \in \mathcal{S}$, then for every subset $B \subseteq A$, $B \in \mathcal{S}$, and (iii) For any $A, B \in \mathcal{S}$ with $|B| < |A|$, there exists an $b \in B \setminus A$ such that $B \cup \{b\} \in \mathcal{S}$. The rank of a matroid $\mathcal{M}$ is the size of the largest independent set in $\mathcal{S}$. Using the definition of matroid, it can be easily seen that all inclusion-wise maximal independent sets (called \emph{bases}) have the same size.

An instance of \matmedwo is given by $(X, F, \mathcal{M}, m)$, where $\mathcal{M} = (F, \mathcal{S})$ is a matroid with rank $k$ defined over a finite ground set $F$, and $X, F$ are sets of clients and facilities, belonging to a finite metric space $(\Gamma, \d)$. The objective is to find a set $C \subseteq F$ of facilities that minimizes $\deltam{m}(X, C)$, and $C \in \mathcal{S}$, i.e., $C$ is an independent set in the given matroid. Note that an explicit description of a matroid of rank $k$ may be as large as $n^k$. Therefore, we assume that we are given an efficient \emph{oracle access} to the matroid $\mathcal{M}$. That is, we are provided with an algorithm $\mathcal{A}$ that, given a candidate set $S \subseteq F$, returns in time $T(\mathcal{A})$ (which is assumed to be polynomial in $|F|$), returns whether $S \in \mathcal{I}$. 

We can adapt our approach to \matmedwo in a relatively straightforward manner. Recall that our algorithm needs to start with an instance of outlier-free problem (i.e., \matmed) that provides a lower bound on the optimal cost of the given instance. To this end, given an instance $\cI= (X, F, \mathcal{M} = (F, \mathcal{S}), m)$ of \matmedwo, we define an instance $\cI' = (X, F, \mathcal{M'}, 0)$ of Matroid Median with $0$ Outliers (i.e., Matroid Median), where $\mathcal{M}' = (F \cup X, \mathcal{S}')$ is defined as follows. $\mathcal{S}' = \{Y \cup C : Y \subseteq X \text{ with } |Y| \le m \text{ and } C \subseteq F \text{ with } C \in \mathcal{S}\}$. That is, each independent set of $\mathcal{M}'$ is obtained by taking the union of an independent set of facilities from $\mathcal{M}$, and a subset of $X$ of size at most $m$. It is straightforward to show that $\mathcal{M}'$ satisfies all three axioms mentioned above, and thus is a matroid over the ground set $F \cup X$. In particular, it is the direct sum of $\mathcal{M}$ and a uniform matroid $\mathcal{M}_m$ over $X$ of rank $m$ (i.e., where any subset of $X$ of size at most $m$ is independent). Note that using the oracle algorithm $\mathcal{A}$, we can simulate an oracle algorithm to test whether a candidate set $C \subseteq F \cup X$ is independent in $\mathcal{M}'$. Therefore, using a $(2+\epsilon)$-approximation for \matmed \cite{DBLP:conf/icalp/Cohen-AddadG0LL19} in time FPT in $k$ and $\epsilon$, we can find a set $A \subseteq F \cup X$ of size at most $k+m$ that we can use to construct a coreset. The details about enumeration are similar to that for \kmedwo, and are thus omitted.

\subsection{Colorful $k$-Median} 

This is an orthogonal generalization of \kmedwo to ensure a certain notion of \emph{fairness} in the solution (see \cite{JiaSS20}). Suppose the set of points $X$ is partitioned into $\ell$ different colors $X_1 \uplus X_2 \uplus \ldots \uplus X_\ell$. We are also given the corresponding number of outliers $m_1, m_2, \ldots, m_\ell$. The goal is to find a set of at most facilities $C$ to minimize the connection cost of all except at most $m_t$ outliers from each color class $X_t$, i.e., we want to minimize the cost function: $\sum_{t = 1}^\ell \deltam{m_t}(X_t, C)$. This follows a generalizations of the well-known $k$-\textsc{Center} problem introduced in \cite{Bandyapadhyay0P19} and \cite{AneggAKZ20,JiaSS20} , called \textsc{Colorful $k$-Center}. Similar generalization of \textsc{Facility Location} has also been studied in \cite{chekuri2022algorithms}. 

Using our ideas, we can find an FPT approximation parameterized by $k$, $m = \sum_{t = 1}^\ell m_t$, and $\epsilon$. To this end, we sample sufficiently many points from each color class $X_t$ separately, and argue that it preserves the cost appropriately. The technical details follow the same outline as that for $k$-Median with $m$ Outliers. In particular, during the enumeration phase---just like that for \kmedwo---we obtain several instances of \kmed. That is, our algorithm is \emph{color-agnostic} after constructing the coreset. Thus, we obtain a tight $(1+\frac{2}{e}+\epsilon)$-approximation for this problem. This is the first non-trivial true approximation for this problem -- previous work \cite{GuptaMZ21} only gives a \emph{pseudo-approximation}, i.e., a solution with cost at most a constant times that of an optimal cost, but using slightly more than $k$ facilities. 

\subsection{A Combination of Above Generalizations}

Our technique also works for a combination of the aforementioned generalizations that are orthogonal to each other. To consider an extreme example, consider \textsc{Colorful Matroid Median} with $\ell$ different color classes (a similar version for $k$-\textsc{Center} objective has been recently studied by \cite{AneggKZ22}), where we want to find a set of facilities that is independent in the given matroid, in order to minimize the sum of distances of all except $m_t$ outlier points for each color class $X_t$. By using a combination of the ideas mentioned above, one can get FPT approximations for such generalizations. 


\section{Concluding Remarks} 
In this paper, we give a reduction from \kmedwo to \kmed that runs in time FPT in $k, m$, and $\epsilon$, and preserves the approximation ratio up to an additive $\epsilon$ factor. As a consequence, we obtain improved FPT approximations for \kmedwo in general as well as special kinds of metrics, and these approximation guarantees are known to be tight in general. Furthermore, our technique is versatile in that it also gives improved approximations for related clustering problems, such as \kmeanswo, \matmedwo, and \textsc{Colorful $k$-Median}, among others. 

The most natural direction is to improve the FPT running time while obtaining the tight approximation ratios. More fundamentally, perhaps, is the question whether we need an FPT dependence on the number of outliers, $m$; or whether it is possible to obtain approximation guarantees for \kmedwo matching that for \kmed, with a running time that is FPT in $k$ and $\epsilon$ alone.

\subsection*{Acknowledgments.}
We thank Rajni Dabas for bringing a few minor typos to our attention.
\\T. Inamdar is supported by  the  European  Research  Council  (ERC)  under  the  European  Union’s  Horizon  2020  research  and innovation programme (grant agreement No 819416). S. Saurabh is supported by the European  Research  Council  (ERC)  under  the  European  Union’s  Horizon  2020  research  and innovation programme (grant agreement No 819416) and Swarnajayanti Fellowship (No DST/SJF/MSA01/2017-18). 

\bibliographystyle{siam}
\bibliography{references.bib}

\appendix

\section{$(k,z)$-clustering with Outliers}
Let $z \ge 1$ be a fixed real that is not part of the input of the problem.
\\The input of the $(k, z)$-\textsc{Clustering} problem is an instance $\cI = ((\Gamma, \d), X, F, k)$, where $(\Gamma, \d)$ is a metric space, $X \subseteq \Gamma$ is a (finite) set of $n$ points, called \emph{points} or \emph{clients}, $F \subseteq \Gamma$ is a set of \emph{facilities}, and $k$ is a positive integer. 
The task is to find a set $C \subseteq F$ of facilities (called \emph{centers}) in $F$ that minimizes the following cost function:
\[\cost(X, C) \coloneqq \sum_{p \in X} \cost(p, C) \]
where $\cost(p, C) \coloneqq (\d(p, C))^z$. 


\paragraph{$(k, z)$-clustering with $m$ outliers.}
Here, the input contains an additional integer $1 \le m \le n$, and the goal is to find a set $X' \subseteq X$ of $n-m$ points, such that $\cost(X', C)$ is minimized (over choices all of $X'$ and $C$). Here, the set $X \setminus X'$ of at most $m$ points corresponds to the set of \emph{outliers}. In another notation, we want to find a set $C \subseteq F$ of at most $k$ centers that minimizes $\deltam{m}(X, C) \coloneqq \summ{m} \{\cost(p, C) : p \in X \}$, i.e., the sum of $n-m$ smallest distances of points in $X$ to the set of centers $C$.

First, we state a few properties about the $z$-th powers of distances, which will be subsequently useful in the analysis.
\begin{proposition} \label{prop:distance-z}
	Let $P, C \subseteq \Gamma$ be non-empty finite subsets of points. For any point $p \in P$, the following holds:
	\begin{itemize}
		\item $\d(P, C)^z \le \cost(p, C)  \le (\d(P, C) + \diam(P))^z \le 2^{z} \cdot \lr{\d(P, C)^z + \diam(P)^z}$
		\item $(\d(p, C) - \d(P, C))^z \le (\diam(P))^z$
	\end{itemize}
\end{proposition}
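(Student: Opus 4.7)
The plan is to prove the two items independently, with the first chain of inequalities relying on the minimum/triangle-inequality characterization of $\d(P,C)$ and on convexity of $x \mapsto x^z$, and the second being a direct corollary of the first.

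For the first item, I would prove the three inequalities left-to-right. The leftmost inequality $\d(P,C)^z \le \cost(p,C)$ follows immediately because $\d(P,C) = \min_{q \in P}\d(q,C) \le \d(p,C)$ for every $p \in P$, and the map $x \mapsto x^z$ is monotone non-decreasing on $[0,\infty)$ for $z \ge 1$. For the middle inequality, I would pick a point $q^\star \in P$ realizing $\d(q^\star, C) = \d(P,C)$, and, by triangle inequality in the metric $\d$, obtain $\d(p, C) \le \d(p, q^\star) + \d(q^\star, C) \le \diam(P) + \d(P,C)$, so raising both sides to the $z$-th power gives the middle inequality. The rightmost inequality $(a+b)^z \le 2^z(a^z + b^z)$ for $a,b \ge 0$ with $a = \d(P,C)$ and $b = \diam(P)$ follows from convexity of $x \mapsto x^z$: Jensen's inequality gives $\bigl(\frac{a+b}{2}\bigr)^z \le \frac{a^z + b^z}{2}$, which rearranges to $(a+b)^z \le 2^{z-1}(a^z + b^z) \le 2^z(a^z + b^z)$.

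The second item is a short consequence of what was established while proving the first. The leftmost inequality of the first item shows $\d(P,C) \le \d(p,C)$, so $\d(p,C) - \d(P,C) \ge 0$; and from the argument for the middle inequality we have $\d(p,C) - \d(P,C) \le \diam(P)$. Since both sides are non-negative and $z \ge 1$, raising to the $z$-th power preserves the inequality, yielding $(\d(p,C) - \d(P,C))^z \le \diam(P)^z$.

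I do not anticipate a real obstacle: the statement is a collection of elementary metric and convexity facts, and the only mildly non-routine step is the final inequality in the chain, where the use of convexity of $x \mapsto x^z$ (equivalently, power-mean inequality) is the standard way to pass from $(a+b)^z$ to a constant multiple of $a^z + b^z$. If tighter control is desired, one could replace $2^z$ by $2^{z-1}$, but since the proposition is stated with the looser bound $2^z$, the convexity argument is more than sufficient.
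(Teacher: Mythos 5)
Your proposal is correct and follows essentially the same route as the paper: pick a point of $P$ realizing $\d(P,C)$, apply the triangle inequality to get $\d(p,C) \le \d(P,C) + \diam(P)$, and deduce both items from that chain. The only (immaterial) difference is in the final elementary step, where you derive $(a+b)^z \le 2^z(a^z+b^z)$ from convexity of $x \mapsto x^z$ (in fact getting the sharper constant $2^{z-1}$), whereas the paper uses $a+b \le 2\max\{a,b\}$ followed by $\max\{a,b\}^z \le a^z + b^z$.
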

\begin{proof}
	Let $p^* \in P$ be a point realizing the smallest distance $\d(P, C)$. It follows that for any $p \in P$, 
	\begin{align*}
		\d(P, C) = \d(p^*, C) &\le \d(p, C) 
		\\&\le \d(p, p^*) + \d(p^*, C) \tag{by triangle inequality} 
		\\&\le \diam(P) + \d(P, C)  \tag{$\d(p, p^*) \le \diam(P)$}
		\\&\le 2 \max\{\diam(P), \d(P, C)\}
	\end{align*}
	Now, by taking the $z$-th power of each term, we get the first inequality, which follows from $\max\{a, b\} \le a+b$.
	
	Note that the first and third line in the preceding chain of inequalities implies that $(\d(p, C) - \d(P, C)) \le \diam(P)$. Note that both sides of the inequality are non-negative. Thus, by taking the $z$-th power of both sides, the second item follows.
\end{proof}

Consider an instance $\cI = ((\Gamma, \d), X, F, k, m)$ be an instance of $(k, z)$-\textsc{Clustering with $m$ Outliers}. We define an instance $\cI' = ((\Gamma, \d), X, F \cup X, k+m, 0)$ of $(k+m, z)$-\textsc{Clustering} (without outliers), where in addition to the original set of facilities, there is a facility co-located with each client. The following observation and its proof is analogous to Observation \ref{obs:kmed-lb}, and thus we omit the proof.

\begin{observation}\label{obs:kz-lb}
	$\opt(\cI') \le \opt(\cI)$, i.e., the value of an optimal solution to $\cI'$ is a lower bound on the value of an optimal solution to $\cI$.
\end{observation}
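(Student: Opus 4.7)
The plan is to mimic the proof of Observation \ref{obs:kmed-lb} almost verbatim, the only difference being that the cost function now involves the $z$-th powers of distances. The key structural property we use is that $\cost(p, C)$ is monotone non-increasing in $C$: enlarging the set of centers can only decrease the distance of any client to its nearest center, and since $x \mapsto x^z$ is monotone non-decreasing for $x \ge 0$, this carries over to the $z$-th powers. Note that this argument does not need any (relaxed) triangle inequality, so $z \ge 1$ versus $z < 1$ is irrelevant here.

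Concretely, I would proceed as follows. First, fix an optimal solution $(Y^*, C^*)$ to $\cI$, so that $Y^* \subseteq X$ has size at most $m$, $C^* \subseteq F$ has size at most $k$, and $\opt(\cI) = \deltam{m}(X, C^*) = \sum_{p \in X \setminus Y^*} \cost(p, C^*)$. Define a candidate solution to $\cI'$ by taking the set of centers to be $C' \coloneqq C^* \cup Y^*$, viewed as a subset of $F \cup X$. Since $|C^*| \le k$ and $|Y^*| \le m$, we have $|C'| \le k + m$, so $C'$ is feasible for $\cI'$.

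Next I would bound the cost. Split $X$ into $Y^*$ and $X \setminus Y^*$. For every $p \in Y^*$, the point $p$ itself is in $C'$ (as a co-located facility), hence $\d(p, C') = 0$ and $\cost(p, C') = 0$. For every $q \in X \setminus Y^*$, since $C^* \subseteq C'$, we have $\d(q, C') \le \d(q, C^*)$, and raising both sides to the $z$-th power gives $\cost(q, C') \le \cost(q, C^*)$. Summing, and using that $\cI'$ has no outliers, we get
\begin{align*}
\opt(\cI') \;\le\; \deltam{0}(X, C') \;=\; \sum_{p \in X} \cost(p, C') \;\le\; \sum_{q \in X \setminus Y^*} \cost(q, C^*) \;=\; \opt(\cI),
\end{align*}
which is the desired inequality.

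There is essentially no obstacle here; the argument is immediate once one recognizes that augmenting the center set with the outliers as new facilities zeroes out their contribution and can only help the remaining clients. The only thing to double-check is that we are allowed to place facilities at client locations in $\cI'$, which is built into the construction $F \cup X$ in the definition of $\cI'$ given just before the observation.
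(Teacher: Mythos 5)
Your proof is correct and follows essentially the same route as the paper, which constructs $C' = C^* \cup Y^*$ and observes that the co-located facilities zero out the outliers' contribution while $C^* \subseteq C'$ handles the rest; the paper explicitly omits the proof of Observation~\ref{obs:kz-lb} as being analogous to Observation~\ref{obs:kmed-lb}, and your adaptation via monotonicity of $x \mapsto x^z$ is exactly the intended one. (Your write-up in fact assigns the two cases $p \in Y^*$ and $q \in X \setminus Y^*$ to the correct inequalities, which the paper's proof of Observation~\ref{obs:kmed-lb} accidentally swaps.)
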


The following definitions and the construction of the coreset is analogous to that for $k$-median with $m$ outliers, with appropriate modifications needed for $z$-th power of distances. First, we assume that there exists a $\tau$-approximation algorithm for $(k, z)$-clustering problem that runs in polynomial time, where $\tau = O(1)$. Then, by using this $\tau$-approximation algorithm for the instance $\cI'$, we obtain a set of at most $k' \le k+m$ centers $A$ such that $\deltam{0}(X, A) \le \tau \cdot \opt(\cI') \le \tau \cdot \opt(\cI)$. Let $R = \lr{\frac{\deltam{0}(X, A)}{\tau n}}^{1/z}$ be a lower bound on average radius, and let $\phi = \lceil \log(\tau n) \rceil$. For each $c_i \in A$, let $X_i \subseteq X$ denote the set of points whose closest center in $A$ is $c_i$. By arbitrarily breaking ties, we assume that the sets $X_i$ are disjoint, i.e., the sets $\{X_i\}_{1 \le i \le k'}$ form a partition of $X$. Now, we define the set of \emph{rings} centered at each center $c_i$ as follows.
\[\xij \coloneqq \begin{cases}
	B_{X_i}(c_i, R) & \text{ if } j = 0
	\\B_{X_i}(c_i, 2^j R) \setminus  B_{X_i}(c_i, 2^{j-1} R) & \text{ if } j \ge 1
\end{cases}
\]

Let $s = \frac{c \tau^2 2^{(c'z)}}{\epsilon^2} \lr{m + k \ln n  + \ln(1/\lambda)}$, for some large enough constants $c, c'$. We define a weighted set of points $\sij \subseteq \xij$ as follows. If $|\xij| \le s$, then say that $\xij$ is \emph{small}, and let $\sij \coloneqq \xij$, and let the weight of each point $p \in \sij$ be $1$. Otherwise, if $|\xij| > s$, then say that $\xij$ is \emph{large}. In this case, let $Y_{i, j} \subseteq X_{i, j}$ be an arbitrary subset of size $|\xij| \mod q$. We add each point $q \in Y_{i, j}$ to $\sij$ with weight $1$. Furthermore, we sample $s$ points uniformly at random (with replacement) $\xij \setminus Y_{i, j}$, and add to the set $\sij$ with weight equal to $\frac{|\xij \setminus Y_{ij} |}{s}$, which is an integer. Thus, we assume that $|S_{i, j}| \le 2s$, and the weight of every point in $\sij$ is an integer. Finally, let $S = \bigcup_{i, j} \sij$.

\begin{lemma} \label{lem:kz-samplinglemma}
	Let $(\Gamma, \d)$ be a metric space, and let $V \subseteq \Gamma$ be a finite set of points. Let $\lambda', \xi > 0$, $q \ge 0$, be parameters, and define $s' = \frac{4}{\xi^2} \lr{q + \ln\frac{2}{\lambda'}}$. If $|V| \ge s'$, and $U$ is a sample of $s'$ points picked uniformly and independently at random from $V$, with each point of $U$ having weight $|V|/|U|$, such that the total weight $w(U)$ is equal to $|V|$, then for any fixed finite set $C \subseteq \Gamma$, and for any $0 \le t \le q$, with probability at least $1 - \lambda'$ it holds that
	\begin{align}
		&\left|\deltam{t}(V,C) - \deltamw{ t'}(U, C) \right| \le 2^{2z+2} \xi |V| \cdot (\diam(V)^z + \d(V, C)^z), \label{eqn:kz-mainineq}
	\end{align}
	where $t' = \lfloor t|U|/|V| \rfloor$.
\end{lemma}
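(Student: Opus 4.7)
The plan is to mirror the proof of Lemma \ref{lem:kmed-samplinglemma} essentially verbatim, with $h(v) \coloneqq \d(v, C)^z$ taking the role of $\d(v, C)$. As before, define $h(V) = \sum_{v \in V} h(v)$, $h(U) = \sum_{u \in U} h(u) \cdot w(u)$, $h'(V) = \deltam{t}(V, C)$, $h'(U) = \deltam{t'}(U, C) \cdot (|V|/|U|)$, $\eta(V) = \min_{v \in V} h(v) = \d(V, C)^z$, and $\eta(U) = \min_{u \in U} h(u)$. The key difference from the $k$-median proof is that $\d(\cdot, C)^z$ does not satisfy triangle inequality; each place where the original proof invoked triangle inequality for $\d$ must now invoke the relaxed inequality supplied by Proposition \ref{prop:distance-z}, which introduces a factor of up to $2^z$.

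First I would apply Proposition \ref{prop:concentration} to the function $h$ with $\eta \coloneqq \d(V,C)^z$ and $M \coloneqq 2^z(\d(V,C)^z + \diam(V)^z)$; by Proposition \ref{prop:distance-z}, these indeed satisfy $\eta \le h(v) \le \eta + M$ for every $v \in V$. Choosing $\delta = \xi M/2$, the sample size $s' \ge (4/\xi^2)(q + \ln(2/\lambda'))$ satisfies the hypothesis $s' \ge (M^2/(2\delta^2))\ln(2/\lambda')$, so with probability at least $1 - \lambda'$,
\begin{equation*}
\left| \frac{h(V)}{|V|} - \frac{h(U)}{|U|} \right| \le \frac{\xi}{2} \cdot 2^{z}\bigl(\d(V, C)^z + \diam(V)^z\bigr).
\end{equation*}
Next I would establish the $z$-analog of Observation \ref{obs:kmed-h}: the quantity $M$ plays the role that $\diam(V)$ played before, since by Proposition \ref{prop:distance-z} we have $h(v) - \eta(V) \le M$ for every $v \in V$, and the same bound gives $\eta(U) - \eta(V) \le M$ via $U \subseteq V$. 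Plugging these into the arithmetic manipulation of Claim \ref{clm-Delta} and using $|V| \ge 4q/\xi$ (so every $t/|V|$ term is at most $\xi/4$), one obtains
\begin{equation*}
\left| \frac{h'(V)}{|V|} - \frac{h'(U)}{|U|} \right| \le 2^{2z+2} \xi \cdot \bigl(\d(V,C)^z + \diam(V)^z\bigr),
\end{equation*}
after which multiplying by $|V|$ and using $h'(U) \cdot (|V|/|U|) = \deltamw{t'}(U, C)$ yields (\ref{eqn:kz-mainineq}). Since the concentration event does not depend on $t$, the resulting inequality holds simultaneously for every $0 \le t \le q$, exactly as in the $k$-median case.

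The main obstacle is careful bookkeeping of the $2^z$ factors. Roughly, one factor of $2^z$ enters from bounding the range $M$ of $h$ in terms of $\d(V,C)^z + \diam(V)^z$, and a second factor enters in the analog of Claim \ref{clm-Delta} when the ``$z$-diameter'' terms $t M/|V|$ are combined with the concentration error; together with the additive combination of several such terms, these aggregate to the constant $2^{2z+2}$. Everything else in the original proof---the role of $t' = \lfloor t|U|/|V| \rfloor$, the weight assignment $w(u) = |V|/|U|$, and the final union bound over $t$---transfers without modification.
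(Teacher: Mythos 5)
Your proposal is correct and follows essentially the same route as the paper's own proof: define $h(v) = \d(v,C)^z$, bound its range by $O(2^z)\cdot(\d(V,C)^z + \diam(V)^z)$ via Proposition~\ref{prop:distance-z} in place of the triangle inequality, apply Proposition~\ref{prop:concentration}, and rerun the arithmetic of Claim~\ref{clm-Delta} while tracking the extra $2^z$ factors, which accumulate to $2^{2z+2}$. The paper's Observation~\ref{obs:kz-h} is exactly the $z$-analog of Observation~\ref{obs:kmed-h} that you describe, so no further changes are needed.
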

\begin{proof}
	Throughout the proof, we fix the set $C$ and $0 \le t \le q$ as in the statement of the lemma. Next, we define the following notation. For all $v \in V$, let $h(v) = \cost(v, C) = \d(v, C)^z$, and let $h(V) \coloneqq \sum_{v \in V}h(v)$, and $h(U) \coloneqq \sum_{u \in U} h(u)$. Analogously, let $h'(V) \coloneqq \deltam{t}(V, C)$, and $h'(U) \coloneqq \deltam{t'}(U, C)$, i.e., sum of all except $t$ (resp.\ $t'$) largest $h$-values. Let $\eta(V) \coloneqq \min_{v \in V} \d(v, C)^z$, and $\eta(U) \coloneqq \min_{u \in U} \d(u, C)$. We summarize a few properties about these definitions in the following observation, which is analogous to Observation \ref{obs:kmed-h}.
	\begin{observation} \label{obs:kz-h} \
		\begin{itemize}
			\item $\lr{t\frac{|U|}{|V|}-1} \le t' \le t\frac{|U|}{|V|}$
			\item For any $p \in P$, $(\eta(V)^z \le \d(p, C)^z = \cost(p, C)^z \le \eta(V)^z + 2^z (\eta(V)^z + \diam(V))$
			\item $h'(V) \le h(V) - t \cdot \eta(V)^z \le h(V)$, and $h'(V) \ge h(V) - 2^z \cdot t \cdot (\eta(V)^z + \diam(V)^z)$
			\item $h'(U) \le h(U)$, and $h'(U) \ge h(U) - 2^z \cdot t \frac{|U|}{|V|} \cdot (\eta(U)^z + \diam(U)^z)$
			\item $\eta(V)^z \le \eta(U)^z \le 2^z (\eta(V)^z + \diam(V)^z)$
		\end{itemize}
	\end{observation}
	\begin{proof}
		The first item is immediate from the definition $t' = \lfloor t|U|/|V| \rfloor$. 
		
		Consider the second item. For each $v \in V$, let $g(v) \coloneqq \d(v, C) - \eta(V)$. Let $V' \subseteq V$ denote a set of points of size $t$ that have the $t$ largest distances to the centers in $C$. From Proposition \ref{prop:distance-z}, we get that for any $p \in V$, $\eta(V)^z \le \cost(p, C)^z \le 2^z \cdot (\eta(V)^z + \diam(P)^z)$. This implies that $g(v) \le \diam(V)$ for all $v \in V$. Now, observe that 
		\begin{align}
			h(V) &= h'(V) + \sum_{v \in V'} \lr{\eta(V) + g(v)}^z \nonumber \tag{Since $h'(V)$ excludes the distances of points in $V'$}
			\\&\ge h'(V) + t \cdot \eta(V)^z \tag{$g(v) \ge 0$ for all $v \in V$}
		\end{align}
		By rearranging the last inequality, we get the first part of the third item. Also note that the first inequality also implies that $h(V) \le h'(V) + 2^z t\cdot \eta(V) + 2^z \sum_{v \in V} g(v)^z$, via Proposition \ref{prop:distance-z}. Then, by recalling that $g(v) \le \diam(V)$ for all $v \in V$, the second part of the third item follows.
		
		The proof of the fourth item is analogous to that of the third item. In addition, we need to combine the inequalities from the first item of the observation. We omit the details. The fifth item follows from the fact that $U \subseteq V$, and via triangle inequality.
	\end{proof}
	
	Let $\eta = \eta(V)^z$, $M = 2^{2z+2} (\eta(V)^z + \diam(V)^z)$, and $\delta = \xi M /2$. Then, the second item of Observation \ref{obs:kz-h} implies that $\eta \le h(v) \le \eta + M$ for all $v \in V$. Then, Proposition \ref{prop:concentration} implies that, 
	\begin{align*}
		&\Pr \biggl[ \left| \frac{\sum_{v \in V} \cost(v, C)}{|V|} - \frac{\sum_{u \in U} \cost(u, C)}{|U|} \right| \ge \frac{\xi}{2} 2^z (\eta(V)^z) + \diam(V)^z) \biggl] 
		\\&= \Pr \left[ \left| \frac{h(V)}{|V|} - \frac{h(U)}{|U|} \right| \ge \delta \right] 
		\\&\le \lambda'.
	\end{align*}
	Thus, with probability at least $1-\lambda'$, we have that 
	\begin{equation}
		\left| \frac{h(V)}{|V|} - \frac{h(U)}{|U|} \right| \le \frac{\xi}{2} \cdot M \label{eqn:kz-ineq2}
	\end{equation}
	In the rest of the proof, we condition on this event, and assume that (\ref{eqn:kz-ineq2}) holds, and show that the inequality in the lemma holds with probability 1. First, consider,
	\begin{align}
		\frac{h'(U)}{|U|} - \frac{h'(V)}{|V|} &\le \frac{h(U)}{|U|} - \frac{h(V)}{|V|} + \frac{2^z \cdot t \cdot (\eta(V)^z + \diam(V)^z)}{|V|} \nonumber \tag{From Obs. \ref{obs:kz-h}}
		\\&\le \frac{\xi}{2} M + \frac{ t \cdot M }{|V|}  \tag{From (\ref{eqn:kz-ineq2})} \nonumber
		\\&\le \xi M \label{eqn:kz-ineq3}
	\end{align}
	where the last inequality follows from the assumption that $|V| \ge s' \ge \frac{4q}{\xi} \ge \frac{4t}{\xi}$.
	Now, consider
	\begin{align}
		&\frac{h'(V)}{|V|} - \frac{h'(U)}{|U|} \nonumber
		\\&\le \frac{h(V)}{|V|} - \frac{h(U)}{|U|} + \frac{2^z \cdot t\frac{|U|}{|V|} \cdot (\eta(U)^z + \diam(V)^z)}{|U|} \tag{From Obs. \ref{obs:kz-h}, Part 4}
		\\&\le \frac{\xi}{2} M + \frac{2^z \cdot t \cdot \eta(U)^z}{|V|} + \frac{2^z \cdot t \cdot  \diam(V)^z}{|V|} \tag{From (\ref{eqn:kmed-ineq2}}
		\\&\le \frac{\xi}{2} M + \frac{2^{2z} \cdot t \cdot (\eta(V)^z + \diam(V)^z) + t \cdot 2^z \cdot \diam(V)^z}{|V|} \tag{From Obs. \ref{obs:kmed-h}, Part 5}
		\\&\le \frac{\xi}{2} M + \frac{2^{2z+1} \cdot t \cdot (\eta(V)^z + \diam(V)^z)}{|V|} \tag{Since $|V| \ge s' \ge \frac{4q}{\xi} \ge \frac{4t}{\xi}$}
		\\&= \frac{\xi}{2} M + \frac{\xi}{2} M = \xi M \label{eqn:kz-ineq4}
	\end{align}
	
	Note that (\ref{eqn:kz-ineq3}) and (\ref{eqn:kz-ineq4}) hold with probability $1$, conditioned on the inequality (\ref{eqn:kz-ineq2}) holding, which happens with probability at least $1-\lambda'$. Therefore, the following inequality holds with probability at least $1-\lambda'$:
	\begin{align}
		\left| h'(V) - h'(U) \cdot \frac{|V|}{|U|} \right| &\le 2^{2z+2}\xi \cdot |V| \cdot (\diam(V)^z + \d(V, C)^z)
	\end{align}
	where we recall that $\eta(V) = \d(V, C)$.	The preceding inequality is equivalent to the inequality in the lemma, by recalling that $h'(V) = \deltam{t}(V, C)$, and $h'(U) \cdot \frac{|V|}{|U|} = \frac{|V|}{|U|} \cdot \deltam{t'}(U, C) = \deltamw{t'}(U, C)$, since the weight of every sampled point in $U$ is equal to $|V|/|U|$. This concludes the proof of the lemma.
\end{proof}

Next, we show the following claim.
\begin{claim} \label{cl:kz-sum}
	\begin{itemize}
		\item $\sum_{i, j} |X_{i, j}| (2^j R)^z \le (1+2^z) \cdot \deltam{0}(X, A) \le (1+2^z)\tau \cdot \opt(\cI)$.
		\item $\sum_{i, j} |X_{i, j}| \diam(X_{i, j})^z \le 2^{z}(1+2^z) \cdot \deltam{0}(X, A) \le 2^{2z+1} \cdot \tau \cdot \opt(\cI)$.
	\end{itemize}
\end{claim}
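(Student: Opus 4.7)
The plan is to mimic the proof of Observation \ref{obs:kmed-sum} from the $k$-median setting, with the only twist being that we must be careful about how the inequality relating $2^j R$ to $\d(p, A)$ behaves after raising to a $z$-th power. Fortunately, the ring definition is set up so that the base pointwise inequality carries over verbatim, and the $z$-th power only contributes a constant factor of $2^z$ on top.

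The first step will be to establish the pointwise bound $(2^j R)^z \le 2^z \d(p, A)^z + R^z$ for every $p \in X_{i,j}$. For $j \ge 1$, a point $p \in X_{i,j}$ lies outside $B_{X_i}(c_i, 2^{j-1} R)$, so $\d(p, A) = \d(p, c_i) > 2^{j-1} R$, which gives $2^j R < 2 \d(p, A)$; for $j = 0$, the bound $2^0 R = R$ is trivial. Hence $2^j R \le \max\{2\d(p, A), R\}$; raising both sides to the $z$-th power and using $\max\{a, b\}^z \le a^z + b^z$ for nonnegative reals gives the claimed bound.

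The second step will be to sum the pointwise bound over $(i, j)$ and $p \in X_{i,j}$. Since $\{X_{i,j}\}$ partitions $X$, the left side collapses to $\sum_{i,j} |X_{i,j}| (2^j R)^z$, and the right side telescopes to $2^z \sum_{p \in X} \d(p, A)^z + n R^z = 2^z \cdot \deltam{0}(X, A) + n R^z$. Plugging in $R^z = \deltam{0}(X,A)/(\tau n)$ and using $\tau \ge 1$ gives $n R^z \le \deltam{0}(X, A)$, yielding the first inequality $(1+2^z)\deltam{0}(X,A)$. The outer bound $(1+2^z)\tau \cdot \opt(\cI)$ then follows from the defining property of the $\tau$-approximation $A$ for $\cI'$ together with Observation \ref{obs:kz-lb}.

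For the second item, I will use the triangle inequality: any two points of $X_{i,j}$ lie within distance $2^j R$ of $c_i$, so $\diam(X_{i,j}) \le 2 \cdot 2^j R$, and hence $\diam(X_{i,j})^z \le 2^z (2^j R)^z$. Multiplying by $|X_{i,j}|$ and summing, the bound reduces to the first item multiplied by $2^z$, giving $\sum_{i,j} |X_{i,j}| \diam(X_{i,j})^z \le 2^z (1+2^z) \cdot \deltam{0}(X, A) \le 2^{2z+1} \tau \cdot \opt(\cI)$, where the last simplification uses $2^z(1+2^z) = 2^z + 2^{2z} \le 2^{2z+1}$. I do not anticipate any real obstacle here; the argument is a mechanical lift of the $k$-median proof to the $z$-th power cost, with the only bookkeeping being the extra constant factors of $2^z$ that appear when converting $\max$ to $+$ after exponentiating.
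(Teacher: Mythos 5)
Your proof is correct and follows essentially the same route as the paper's: bound $2^jR \le \max\{2\d(p,A),R\}$ pointwise, sum over the partition, use the definition of $R$ and the $\tau$-approximation guarantee, and derive the second item from $\diam(X_{i,j}) \le 2\cdot 2^jR$. Your use of $\max\{a,b\}^z \le a^z + b^z$ is in fact the cleaner way to justify the step the paper writes somewhat loosely as an equality after expanding $(2\d(p,A)+R)^z$.
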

\begin{proof}
	
	For any $p \in X_{i, j}$, it holds that $2^j R \le \max \LR{2\d(p, A), R} \le 2\d(p, A) + R$. 
	\begin{align*}
		\sum_{i, j} |X_{i, j}| \cdot (2^{j} R)^z  &\le \sum_{i, j} \sum_{p \in X_{i, j}} (2^j R)^z
		\\&\le \sum_{i, j} \sum_{p \in X_{i, j}} (2\d(p, A) + R)^z
		\\&= 2^z \sum_{p \in X} \d(p, A)^z + |X| \cdot R^z
		\\&= 2^z \cdot \deltam{0}(X, A) + n \cdot R^z
		\\&\le (1+2^z) \cdot \deltam{0}(X, A) \tag{By definition of $R$}
		\\&\le (1+2^z) \cdot \tau \cdot \opt(\cI') 
		\\&\le (1+2^z) \cdot \tau \cdot \opt(\cI). \tag{From Obs. \ref{obs:kz-lb}}
	\end{align*}
	We also obtain the second item by observing that $\diam(X_{i, j}) \le 2 \cdot 2^j \cdot R$, and using an analogous argument.
\end{proof}

Next, we show that the following lemma, which informally states that the union of the sets of sampled points approximately preserve the cost of clustering w.r.t.\ \emph{any} set of at most $k$ centers, \emph{even after} excluding at most $m$ outliers overall. 

\begin{lemma} \label{lem:kz-prob-bound}
	The following statement holds with probability at least $1-\lambda/2$:
	\\For all sets $C \subseteq F$ of size at most $k$, and for all sets of non-negative integers $\{m_{i, j}\}_{i, j}$ such that $\sum_{i, j} m_{i, j} \le m$, 
	\begin{align}
		&\left| \sum_{i, j} \deltam{m_{i, j}}(X_{i, j}, C) - \sum_{i, j} \deltamw{m'_{i, j}}(S_{i, j}, C) \right| \le \epsilon \cdot \sum_{i, j} \deltam{m_{i, j}}(X_{i, j}, C) \label{ineq:kz-coresetbound}
	\end{align}
	where $\displaystyle t_{i, j} = \left\lfloor m_{i, j} / w_{i, j}\right\rfloor$.
\end{lemma}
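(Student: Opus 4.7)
The plan is to mirror the proof of Lemma~\ref{lem:kmed-prob-bound}, with parameters adjusted to absorb the extra $2^{O(z)}$ factors that arise from $z$-th powers not satisfying the triangle inequality. First I would fix an arbitrary center set $C \subseteq F$ of size at most $k$, and fix the integers $\{m_{i,j}\}_{i,j}$ with $\sum_{i,j} m_{i,j} \le m$. For each non-empty ring $X_{i,j}$, I would invoke Lemma~\ref{lem:kz-samplinglemma} with $V = X_{i,j}$, $U = S_{i,j}$, $q = m$, $\lambda' = n^{-k}\lambda/(4(k+m)(1+\phi))$, and $\xi = \epsilon / (c_0 \tau 2^{c_1 z})$ for sufficiently large constants $c_0, c_1$. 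A routine check shows that the required sample size is at most $s$ (our choice of $s$ contains the $2^{c'z}$ and $\tau^2$ factors precisely for this reason). Conditioning on the high-probability event, this yields, for each large $(i,j)$ and the particular value $t_{i,j} = \lfloor m_{i,j}/w_{i,j} \rfloor$,
\begin{equation*}
\bigl| \deltam{m_{i,j}}(X_{i,j},C) - \deltamw{t_{i,j}}(S_{i,j},C) \bigr| \le 2^{2z+2}\xi \, |X_{i,j}|\bigl(\diam(X_{i,j})^z + \d(X_{i,j},C)^z\bigr).
\end{equation*}

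Next I would observe, as in the $k$-median case, that for small rings we have $S_{i,j} = X_{i,j}$ with unit weights, so $\deltam{m_{i,j}}(X_{i,j},C) = \deltamw{t_{i,j}}(S_{i,j},C)$ and their contribution to the left-hand side of \eqref{ineq:kz-coresetbound} is zero. It therefore suffices to sum the above inequality over the family $\mathcal{L}$ of large pairs. Summing the $\diam(X_{i,j})^z$ term gives, via Claim~\ref{cl:kz-sum}, a bound of $2^{2z+1}\tau \cdot \opt(\cI) \le 2^{2z+1}\tau \cdot \deltam{m}(X,C)$. For the $\d(X_{i,j},C)^z$ term I would prove the direct analogue of Claim~\ref{cl:kmed-sum-large}: writing $Y$ for the $m$ points of $X$ farthest from $C$ and $q_{i,j} = |X_{i,j} \cap Y| \le m$, the definition of large ensures $|X_{i,j}| \ge s \ge 2m \ge 2q_{i,j}$, hence $|X_{i,j}| \le 2|X_{i,j}\setminus Y|$; combining with $\d(X_{i,j},C)^z \le \d(p,C)^z$ for every $p \in X_{i,j}\setminus Y$ gives
\begin{equation*}
\sum_{(i,j)\in\mathcal{L}} |X_{i,j}|\,\d(X_{i,j},C)^z \;\le\; 2\sum_{(i,j)\in\mathcal{L}}\sum_{p \in X_{i,j}\setminus Y} \d(p,C)^z \;\le\; 2\,\deltam{m}(X,C).
\end{equation*}

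Putting the two bounds together, the total error is at most $2^{2z+2}\xi\,(2^{2z+1}\tau + 2)\cdot \deltam{m}(X,C) \le 2^{O(z)}\tau\xi \cdot \deltam{m}(X,C)$, which is at most $\epsilon \cdot \deltam{m}(X,C) \le \epsilon \cdot \sum_{i,j} \deltam{m_{i,j}}(X_{i,j},C)$ by our choice of $\xi$. Finally, I would handle the quantification by a union bound: Lemma~\ref{lem:kz-samplinglemma} delivers the per-ring bound simultaneously for \emph{all} $0 \le t \le m$, so a single application per ring already covers every possible choice of $m_{i,j}$; union-bounding over the at most $|A|(1+\phi) = O((k+m)\log n)$ rings and over the at most $n^k$ subsets $C \subseteq F$ of size $k$ uses up probability at most $n^k \cdot (k+m)(1+\phi) \cdot \lambda' \le \lambda/2$, as desired.

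The main subtlety is ensuring the parameter budget works out: the sample-size bound in Lemma~\ref{lem:kz-samplinglemma} scales as $1/\xi^2$, and since we must choose $\xi = \Theta(\epsilon / (\tau 2^{O(z)}))$ to absorb both the $2^{2z+2}$ factor in \eqref{eqn:kz-mainineq} and the $2^{2z+1}\tau$ factor from Claim~\ref{cl:kz-sum}, the required sample size blows up by $2^{O(z)} \cdot \tau^2$ compared to the $k$-median case. This is precisely what is reflected in the definition of $s$ for this section, which contains the $\tau^2 2^{c'z}$ prefactor. Once this bookkeeping is verified, the rest of the argument is a direct transcription of the proof of Lemma~\ref{lem:kmed-prob-bound}.
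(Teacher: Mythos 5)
Your proposal is correct and follows essentially the same route as the paper's proof: the same invocation of Lemma~\ref{lem:kz-samplinglemma} per ring with $\xi = \Theta(\epsilon/(\tau 2^{O(z)}))$, the same dismissal of small rings, the same two summation bounds (via Claim~\ref{cl:kz-sum} and the analogue of Claim~\ref{cl:kmed-sum-large}), and the same union bound over rings and center sets. The only difference is cosmetic: you state the large-ring claim with the $|X_{i,j}|$ factor explicitly, which is in fact the form the argument needs (the paper's Claim~\ref{cl:kz-sum-large} omits it in the statement but uses it implicitly).
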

\begin{proof}
	Fix an arbitrary set $C$ of at most $k$ centers and the integers $\{m_{i,j}\}_{i, j}$ such that $\sum_{i, j} m_{i,j} \le m$ as in the statement of the lemma. For each $i = 1, \ldots, |A|$, and $0 \le j \le \phi$, we invoke Lemma \ref{lem:kz-samplinglemma} by setting $V \gets \xij$, and $U \gets \sij$, $\xi \gets \frac{\epsilon}{2^{9z} \tau}$, $\lambda' \gets n^{-k} \lambda / (4(k+m)(1+\phi))$, and $q \gets m$. This implies that, the following inequality holds with probability at least $1-\lambda'$ for each set $\xij$, and for the corresponding $m_{i, j} \le m$:
	\begin{align}
		&\left| \deltam{m_{i, j}}(X_{i, j},C) - \deltamw{t_{i,j}}(S_{i,j}, C) \right| \le \frac{\epsilon}{2^{9z}\tau} 2^{2z+2} |X_{i,j}| (\diam(X_{i,j}) + \d(X_{i, j}, C)) \label{eqn:kz-sumineq}
	\end{align}
	Note that for any $i, j$, if $\xij < s$, i.e., $\xij$ is \emph{small}, then the sample $S_{i,j}$ is equal to $X_{i, j}$, and each point in $S_{i, j}$ has weight equal to $1$. This implies that $\deltam{t_{i, j}}(X_{i, j}, C) = \deltamw{t'_{i, j}}(S_{i, j}, C)$ for all such $X_{i, j}$, the contribution to the right hand side of inequality (\ref{eqn:kz-sumineq}) is zero. Thus, it suffices to restrict the sum on the right hand side of (\ref{eqn:kz-sumineq}) over \emph{large} sets $X_{i, j}$'s. We have the following claim about the large sets $X_{i, j}$, the proof of which is analogous to that of Claim \ref{cl:kmed-sum-large}, and is therefore omitted.
	\begin{claim} \label{cl:kz-sum-large}
		$\sum_{i, j: X_{i, j} \text{ is large}} \d(X_{i, j}, C)^z \le 2\deltam{m}(X, C)$.
	\end{claim}
	Thus, by revisiting (\ref{eqn:kz-sumineq}), we get:
	\begin{align}
		&\left| \deltam{m}(X, C) - \Delta(C) \right| \nonumber
		\\&\le \sum_{i, j: X_{i, j} \text{ is large}} \left| \deltam{t_{i, j}}(X_{i, j}, C) - \deltamw{t'_{i, j}}(S_{i, j}, C) \right|
		\\&\le \frac{\epsilon}{2^{9z}\tau} \sum_{i, j: X_{i, j} \text{ is large}} 2^{2z+2} \cdot |X_{i,j}| (\diam(X_{i,j})^z + \d(X_{i, j}, C)^z) \tag{By setting $q_{i,j} \gets t_{i,j}$ and $q'_{i,j} \gets t'_{i,j}$ in (\ref{eqn:kz-sumineq})}
		\\&\le \frac{\epsilon}{2^{9z}\tau} \cdot (2^{2z+2} \cdot 2^{2z+1} \tau \opt(\cI) + 2^{2z+3}\deltam{m}(X, C) \tag{From Claim \ref{cl:kz-sum} and Claim \ref{cl:kz-sum-large}}
		\\&\le \frac{\epsilon}{2^{9z}\tau} (2^{9z} \cdot \tau \cdot \deltam{m}(X, C)) = \epsilon \cdot \deltam{m}(X, C)) \nonumber
	\end{align}
	Where, the last inequality follows from the fact that since $C$ is an arbitrary set of at most $k$ centers, $\opt(\cI) \le \deltam{m}(X, C)$.	Note that the preceding inequality holds for a fixed set $C$ of centers with probability at least $1 - |A|\cdot (1+\phi)\lambda' = 1-n^{-k}\lambda/2$, which follows from taking the union bound over all sets $X_{i, j}$, $1 \le i \le |A| \le k+m$, and $0 \le j \le \phi$. 
	
	Since there are at most $n^k$ subsets $C$ of $F$ size at most $k$, the statement of the lemma follows from taking a union bound.
\end{proof}

Once we obtain a coreset $S$ satisfying Lemma \ref{lem:kz-prob-bound}, we can perform a similar enumeration of sets of size at most $m$, and obtain $\lr{\frac{k+m}{\epsilon}}^{O(m)} \cdot n^{O(1)}$ instances of $(k, z)$-\textsc{Clustering}. We call a $\beta$-approximation on each of these instances, and each call takes time $T(n, k)$. The subsequent analysis is identical to that for \kmedwo which can be used to show an analogous version of Lemma \ref{lem:final-kmed-bound}. We omit the details, and conclude this section with the following theorem, which generalizes Theorem \ref{thm-generalreduction}.

\begin{theorem}
	Let $z \ge 1$ be a fixed constant. Suppose there exists a $\beta$-approximation algorithm for $(k, z)$-\textsc{Clustering} with running time $T(n,k)$ for some constant $\beta \geq 1$, and there exists a $\tau$-approximation algorithm for $(k, z)$-\textsc{Clustering} that runs in polynomial time, where $\tau = O(1)$. Then there exists a $(\beta+\epsilon)$-approximation algorithm for $(k, z)$-\textsc{Clustering with Outliers}, with running time $\lr{\frac{k+m}{\epsilon}}^{O(m)} n^{O(1)} \cdot T(n,k)$, where $n$ is the instance size and $m$ is the number of outliers.
\end{theorem}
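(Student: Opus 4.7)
The plan is to replicate the structure of the proof of Theorem~\ref{thm-generalreduction}, using the $(k,z)$-analogues already established in this appendix (Observation~\ref{obs:kz-lb}, Lemma~\ref{lem:kz-samplinglemma}, Claim~\ref{cl:kz-sum}, and the coreset guarantee Lemma~\ref{lem:kz-prob-bound}). Given an instance $\cI = ((\Gamma,\d), X, F, k, m)$, I would first invoke the polynomial-time $\tau$-approximation on the auxiliary instance $\cI'$ to obtain a set $A$ of at most $k+m$ centers, and then build the weighted sample $S = \bigcup_{i,j} S_{i,j}$ exactly as prescribed above, with $s = (c\tau^2 2^{c'z}/\epsilon^2)(m + k\ln n + \ln(1/\lambda))$ for sufficiently large constants $c, c'$, so that Lemma~\ref{lem:kz-prob-bound} holds with probability at least $1-\lambda/2$.

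Having the coreset $S$, I would enumerate every subset $T \subseteq S$ of size at most $m$. For each $T$, I would replace the weighted point set $S \setminus T$ by an unweighted multiset $R_T$ that contains $w(x)$ copies of each $x \in S \setminus T$; then $\wcost(S \setminus T, C) = \cost(R_T, C)$ for every $C \subseteq F$, so feeding $(R_T, F, k)$ to the assumed $\beta$-approximation for $(k,z)$-\textsc{Clustering} yields a center set $C_T \subseteq F$ of size $k$. The algorithm returns $C^* \in \{C_T\}_T$ minimizing $\deltam{m}(X, C^*)$, together with the $m$ points of $X$ farthest from $C^*$ as outliers.

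The correctness analysis then mirrors Lemma~\ref{lem:final-kmed-bound} line for line. Fix any $C \subseteq F$ of size $k$, let $Y \subseteq X$ be the $m$ points of $X$ farthest from $C$, set $m_{i,j} = |Y \cap X_{i,j}|$ and $t_{i,j} = \lfloor m_{i,j}/w_{i,j}\rfloor$, and let $T_{i,j} \subseteq S_{i,j}$ consist of the $t_{i,j}$ points of $S_{i,j}$ farthest from $C$, with $T = \bigcup_{i,j} T_{i,j}$. Applying Lemma~\ref{lem:kz-prob-bound} to $C$ gives
\begin{equation*}
\sum_{i,j} \deltamw{t_{i,j}}(S_{i,j}, C) \le (1+\epsilon)\sum_{i,j} \deltam{m_{i,j}}(X_{i,j}, C) = (1+\epsilon)\,\deltam{m}(X, C).
\end{equation*}
Since $|T| \le m$, this $T$ is among the enumerated subsets, so the $\beta$-approximation delivers some $C' \in \{C_T\}$ with $\wcost(S \setminus T, C') \le \beta \sum_{i,j} \deltamw{t_{i,j}}(S_{i,j}, C)$. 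Applying Lemma~\ref{lem:kz-prob-bound} in the reverse direction to $C'$ yields $(1-\epsilon)\,\deltam{m}(X, C') \le \sum_{i,j} \deltamw{t_{i,j}}(S_{i,j}, C') \le \wcost(S \setminus T, C')$. Chaining these three inequalities and using the optimality $\deltam{m}(X, C^*) \le \deltam{m}(X, C')$ gives $\deltam{m}(X, C^*) \le \tfrac{1+\epsilon}{1-\epsilon}\beta \cdot \deltam{m}(X, C)$; since $\beta$ is a constant, rescaling $\epsilon \mapsto \epsilon/c''$ for a large enough constant $c''$ converts the multiplicative overhead into an additive $\epsilon$, giving the desired $(\beta+\epsilon)$-approximation.

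The only qualitative differences from the $k$-median case are the $2^{O(z)}$ factors stemming from the relaxed triangle inequality of Proposition~\ref{prop:distance-z}; these are already absorbed into the sample size $s$ and into the constants of Lemma~\ref{lem:kz-samplinglemma}, so the analysis carries through for every fixed $z$. For the running time, the bound of Proposition~\ref{prop:coreset-props} (with constants inflated by $2^{O(z)}$) yields $|S| = O(((k+m)\log n/\epsilon)^2)$, so the enumeration produces $|S|^{O(m)} = ((k+m)/\epsilon)^{O(m)} \cdot n^{O(1)}$ candidate subsets $T$, each followed by one call costing $T(n,k)$, matching the claimed bound. The one step requiring care is ensuring Lemma~\ref{lem:kz-prob-bound} holds uniformly over every $k$-subset $C$ and every outlier allocation $\{m_{i,j}\}$ with $\sum_{i,j} m_{i,j} \le m$; this is precisely what is handled by the union bound over $n^k$ center sets baked into $\lambda'$, combined with the uniformity in $0 \le t \le m$ of Lemma~\ref{lem:kz-samplinglemma}, which is what permits $m_{i,j}$ to be chosen adversarially as a function of $C$.
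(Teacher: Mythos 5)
Your proposal follows the paper's proof essentially verbatim: the paper constructs the $(k,z)$ coreset via Observation~\ref{obs:kz-lb}, Lemma~\ref{lem:kz-samplinglemma}, Claim~\ref{cl:kz-sum}, and Lemma~\ref{lem:kz-prob-bound}, and then explicitly states that the enumeration of subsets $T$ and the final approximation analysis are identical to the \kmedwo case (Lemma~\ref{lem:final-kmed-bound}), which is exactly the argument you spell out. Your filled-in details (the weighted-to-unweighted conversion, the three-inequality chain, the rescaling of $\epsilon$, and the $|S|^{O(m)}$ running-time accounting) match the paper's intended proof.
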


\end{document}